
\documentclass[copyright,creativecommons]{eptcs}


\newcommand{\axrls}{\mathsf{AX}}
\newcommand{\ccrls}{\mathsf{CC}}
\usepackage{mathrsfs}
\usepackage{times}
\usepackage{color}
\usepackage{graphics}
\usepackage{pdfsync} 
\usepackage{cmll}
\usepackage{euscript}
\usepackage{amssymb,amsmath}

\message{<Paul Taylor's Proof Trees, 2 August 1996>}

\newdimen\proofrulebreadth \proofrulebreadth=.05em
\newdimen\proofdotseparation \proofdotseparation=1.25ex
\newdimen\proofrulebaseline \proofrulebaseline=2ex
\newcount\proofdotnumber \proofdotnumber=3
\let\then\relax
\def\hfi{\hskip0pt plus.0001fil}
\mathchardef\squigto="3A3B
%
\newif\ifinsideprooftree\insideprooftreefalse
\newif\ifonleftofproofrule\onleftofproofrulefalse
\newif\ifproofdots\proofdotsfalse
\newif\ifdoubleproof\doubleprooffalse
\let\wereinproofbit\relax
%
\newdimen\shortenproofleft
\newdimen\shortenproofright
\newdimen\proofbelowshift
\newbox\proofabove
\newbox\proofbelow
\newbox\proofrulename
%
\def\shiftproofbelow{\let\next\relax\afterassignment\setshiftproofbelow\dimen0 }
\def\shiftproofbelowneg{\def\next{\multiply\dimen0 by-1 }%
\afterassignment\setshiftproofbelow\dimen0 }
\def\setshiftproofbelow{\next\proofbelowshift=\dimen0 }
\def\setproofrulebreadth{\proofrulebreadth}

\def\prooftree{
%
\ifnum  \lastpenalty=1
\then   \unpenalty
\else   \onleftofproofrulefalse
\fi
%
\ifonleftofproofrule
\else   \ifinsideprooftree
        \then   \hskip.5em plus1fil
        \fi
\fi
%
\bgroup
\setbox\proofbelow=\hbox{}\setbox\proofrulename=\hbox{}%
\let\justifies\proofover\let\leadsto\proofoverdots\let\Justifies\proofoverdbl
\let\using\proofusing\let\[\prooftree
\ifinsideprooftree\let\]\endprooftree\fi
\proofdotsfalse\doubleprooffalse
\let\thickness\setproofrulebreadth
\let\shiftright\shiftproofbelow \let\shift\shiftproofbelow
\let\shiftleft\shiftproofbelowneg
\let\ifwasinsideprooftree\ifinsideprooftree
\insideprooftreetrue
%
\setbox\proofabove=\hbox\bgroup$\displaystyle 
\let\wereinproofbit\prooftree
%
\shortenproofleft=0pt \shortenproofright=0pt \proofbelowshift=0pt
%
\onleftofproofruletrue\penalty1
}

\def\eproofbit{
%
\ifx    \wereinproofbit\prooftree
\then   \ifcase \lastpenalty
        \then   \shortenproofright=0pt  
        \or     \unpenalty\hfil         
        \or     \unpenalty\unskip       
        \else   \shortenproofright=0pt  
        \fi
\fi
%
\global\dimen0=\shortenproofleft
\global\dimen1=\shortenproofright
\global\dimen2=\proofrulebreadth
\global\dimen3=\proofbelowshift
\global\dimen4=\proofdotseparation
\global\count255=\proofdotnumber
%
$\egroup  
%
\shortenproofleft=\dimen0
\shortenproofright=\dimen1
\proofrulebreadth=\dimen2
\proofbelowshift=\dimen3
\proofdotseparation=\dimen4
\proofdotnumber=\count255
}

\def\proofover{
\eproofbit 
\setbox\proofbelow=\hbox\bgroup 
\let\wereinproofbit\proofover
$\displaystyle
}%
%
\def\proofoverdbl{
\eproofbit 
\doubleprooftrue
\setbox\proofbelow=\hbox\bgroup 
\let\wereinproofbit\proofoverdbl
$\displaystyle
}%
%
\def\proofoverdots{
\eproofbit 
\proofdotstrue
\setbox\proofbelow=\hbox\bgroup 
\let\wereinproofbit\proofoverdots
$\displaystyle
}%
%
\def\proofusing{
\eproofbit 
\setbox\proofrulename=\hbox\bgroup 
\let\wereinproofbit\proofusing
\kern0.3em$
}

\def\endprooftree{
\eproofbit 
  \dimen5 =0pt
%
\dimen0=\wd\proofabove \advance\dimen0-\shortenproofleft
\advance\dimen0-\shortenproofright
%
\dimen1=.5\dimen0 \advance\dimen1-.5\wd\proofbelow
\dimen4=\dimen1
\advance\dimen1\proofbelowshift \advance\dimen4-\proofbelowshift
%
\ifdim  \dimen1<0pt
\then   \advance\shortenproofleft\dimen1
        \advance\dimen0-\dimen1
        \dimen1=0pt
        \ifdim  \shortenproofleft<0pt
        \then   \setbox\proofabove=\hbox{%
                        \kern-\shortenproofleft\unhbox\proofabove}%
                \shortenproofleft=0pt
        \fi
\fi
%
\ifdim  \dimen4<0pt
\then   \advance\shortenproofright\dimen4
        \advance\dimen0-\dimen4
        \dimen4=0pt
\fi
%
\ifdim  \shortenproofright<\wd\proofrulename
\then   \shortenproofright=\wd\proofrulename
\fi
%
\dimen2=\shortenproofleft \advance\dimen2 by\dimen1
\dimen3=\shortenproofright\advance\dimen3 by\dimen4
%
\ifproofdots
\then
        \dimen6=\shortenproofleft \advance\dimen6 .5\dimen0
        \setbox1=\vbox to\proofdotseparation{\vss\hbox{$\cdot$}\vss}%
        \setbox0=\hbox{%
                \advance\dimen6-.5\wd1
                \kern\dimen6
                $\vcenter to\proofdotnumber\proofdotseparation
                        {\leaders\box1\vfill}$%
                \unhbox\proofrulename}%
\else   \dimen6=\fontdimen22\the\textfont2 
        \dimen7=\dimen6
        \advance\dimen6by.5\proofrulebreadth
        \advance\dimen7by-.5\proofrulebreadth
        \setbox0=\hbox{%
                \kern\shortenproofleft
                \ifdoubleproof
                \then   \hbox to\dimen0{%
                        $\mathsurround0pt\mathord=\mkern-6mu%
                        \cleaders\hbox{$\mkern-2mu=\mkern-2mu$}\hfill
                        \mkern-6mu\mathord=$}%
                \else   \vrule height\dimen6 depth-\dimen7 width\dimen0
                \fi
                \unhbox\proofrulename}%
        \ht0=\dimen6 \dp0=-\dimen7
\fi
%
\let\doll\relax
\ifwasinsideprooftree
\then   \let\VBOX\vbox
\else   \ifmmode\else$\let\doll=$\fi
        \let\VBOX\vcenter
\fi
\VBOX   {\baselineskip\proofrulebaseline \lineskip.2ex
        \expandafter\lineskiplimit\ifproofdots0ex\else-0.6ex\fi
        \hbox   spread\dimen5   {\hfi\unhbox\proofabove\hfi}%
        \hbox{\box0}%
        \hbox   {\kern\dimen2 \box\proofbelow}}\doll%
%
\global\dimen2=\dimen2
\global\dimen3=\dimen3
\egroup 
\ifonleftofproofrule
\then   \shortenproofleft=\dimen2
\fi
\shortenproofright=\dimen3
%
\onleftofproofrulefalse
\ifinsideprooftree
\then   \hskip.5em plus 1fil \penalty2
\fi
}


\newcommand{\comment}[1]{}
\newcommand{\ket}[1]{|#1\rangle}


\newtheorem{definition}{Definition}
\newtheorem{example}{Example}
\newtheorem{proposition}{Proposition}
\newtheorem{lemma}{Lemma}
\newtheorem{theorem}{Theorem}
\newenvironment{proof}{\begin{trivlist}
       \item[\hskip \labelsep {\bfseries Proof.}]}{\hfill $\Box$ \end{trivlist}}


\newcommand{\typet}[2]{#1:#2}
\newcommand{\jd}[2]{#1\vdash #2}

\newcommand{\permone}{\sigma}

\newcommand{\pair}[2]{\langle #1,#2\rangle}
\newcommand{\pairtens}[2]{#1\otimes #2}
\newcommand{\linmap}{\multimap}
\newcommand{\abstr}[2]{\lambda #1.#2}

\newcommand{\pof}{\triangleright}


\newcommand{\QL}{\mathsf{Q}\Lambda}
\newcommand{\varone}{x}
\newcommand{\vartwo}{y}

\newcommand{\patone}{P}
\newcommand{\csone}{C}
\newcommand{\termone}{M}
\newcommand{\termtwo}{N}
\newcommand{\termthree}{L}

\newcommand{\cbitone}{B}
\newcommand{\bitone}{b}

\newcommand{\conone}{\Gamma}
\newcommand{\contwo}{\Delta}

\newcommand{\emcon}{\cdot}
\newcommand{\typeone}{A}
\newcommand{\typetwo}{B}
\newcommand{\typethree}{C}

\newcommand{\unopone}{\mathbf{U}}

\newcommand{\opone}{U}

\newcommand{\subst}[3]{#1\{#2/#3\}}
\newcommand{\eqterm}{\approx}
\newcommand{\eqtermnf}{\sim}
\newcommand{\uset}{\mathscr{U}}


\newcommand{\IAM}[1]{\mathsf{IAM}_{#1}}
\newcommand{\autom}[1]{\mathcal{A}_{#1}}

\newcommand{\states}[1]{\mathcal{S}_{#1}}
\newcommand{\trans}[1]{\rightarrow_{#1}}
\newcommand{\occone}{O}
\newcommand{\occtwo}{P}
\newcommand{\occthree}{R}
\newcommand{\soccone}{\varphi}
\newcommand{\socctwo}{\psi}

\newcommand{\ctone}{C}
\newcommand{\cttwo}{D}
\newcommand{\emct}{[\cdot]}
\newcommand{\pconone}{P}
\newcommand{\pcontwo}{Q}
\newcommand{\nconone}{N}
\newcommand{\ncontwo}{M}
\newcommand{\qrone}{\mathbf{Q}}
\newcommand{\qrtwo}{\mathbf{R}}
\newcommand{\qrthree}{\mathbf{S}}

\newcommand{\poccs}[1]{\mathcal{P}(#1)}
\newcommand{\noccs}[1]{\mathcal{N}(#1)}
\newcommand{\emseq}{\varepsilon}

\newcommand{\tdone}{\pi}
\newcommand{\tdtwo}{\rho}
\newcommand{\tdthree}{\sigma}

\newcommand{\sutone}{\mathcal{T}}
\newcommand{\suttwo}{\mathcal{S}}
\newcommand{\sutthree}{\mathcal{V}}

\newcommand{\mapone}[1]{( #1)^{\bullet}}

\newcommand{\MLL}{\textsf{MLL}}
\newcommand{\MELL}{\textsf{MELL}}
\newcommand{\QMLL}{\textsf{QMLL}}
\newcommand{\atomone}{\alpha}
\newcommand{\atomtwo}{\beta}
\newcommand{\formone}{A}
\newcommand{\formtwo}{B}

\newcommand{\lneg}[1]{#1^{\bot}}
\newcommand{\dneg}[1]{#1^{\bot\bot}}

\newcommand{\mll}{\mathsf{MLL}}
\newcommand{\stone}{\mathsf{S}}
\newcommand{\sttwo}{\mathsf{R}}
\newcommand{\stthree}{\mathsf{T}}
\newcommand{\stfour}{\mathsf{U}}

\newcommand{\judgone}{J}
\newcommand{\pmllone}{\xi}
\newcommand{\pmlltwo}{\mu}
\newcommand{\maptwo}[1]{\mathscr{I}(#1)}

\newcommand{\bitocc}[1]{\mathcal{B}(#1)}
\newcommand{\bitval}[1]{\mathcal{V}(#1)}
\newcommand{\transmll}[1]{\mapsto_{#1}}

\newcommand{\mapthree}[3]{\mathscr{R}_{#1,#2}(#3)}
\newcommand{\mapthreenp}[2]{\mathscr{R}_{#1,#2}}

\newcommand{\mmllone}[1]{\mathcal{M}_{#1}}
\newcommand{\stgen}{\mathsf{S}}
\newcommand{\statesmll}[1]{\mathsf{T}_{#1}}

\newcommand{\pfun}[1]{[#1]}
\newcommand{\vecone}{x}

\newcommand{\midd}{\; \; \mbox{\Large{$\mid$}}\;\;}


\newenvironment{varitemize}
{
\begin{list}{\labelitemi}
{\setlength{\itemsep}{0pt}
 \setlength{\topsep}{0pt}
 \setlength{\parsep}{0pt}
 \setlength{\partopsep}{0pt}
 \setlength{\leftmargin}{15pt}
 \setlength{\rightmargin}{0pt}
 \setlength{\itemindent}{0pt}
 \setlength{\labelsep}{5pt}
 \setlength{\labelwidth}{10pt}
}}
{
 \end{list} 
}

\newcounter{numberone}

\newcounter{numbertwo}


\newcommand{\urule}[3]{%
  \prooftree #1 \justifies #2 \using #3 \endprooftree}
\newcommand{\brule}[4]{%
  \prooftree #1\ \ \ #2 \justifies #3 \using #4 \endprooftree}


\newcommand{\tens}{\otimes} 
\newcommand{\lpar}{\parr} 


\newcommand{\VV}{\mathbb{V}}
\newcommand{\AAA}{\mathbb{A}}
\newcommand{\BB}{\mathbb{B}}
\newcommand{\NN}{\mathbb{N}}
\newcommand{\CC}{\mathbb{C}}

\title{Wave-Style Token Machines\\ and Quantum Lambda Calculi}

\author{Ugo Dal Lago
\institute{Universit\`a di Bologna, Italy \& INRIA}
\and 
Margherita Zorzi
\institute{ Universit\`a di Verona, Italy}
}

\begin{document}
\maketitle

\begin{abstract}
  Particle-style token machines are a way to interpret proofs and
  programs, when the latter are written following the principles of
  linear logic. In this paper, we show that token machines also make
  sense when the programs at hand are those of a simple quantum
  $\lambda$-calculus with implicit qubits. This, however, requires
  generalising the concept of a token machine to one in which more
  than one particle travel around the term \emph{at the same
    time}. The presence of multiple tokens is intimately related to
  entanglement and allows us to give a simple operational semantics to
  the calculus, coherently with the principles of quantum computation.
\end{abstract}

\section{Introduction}
One of the strongest trends in computer science is the (relatively
recent) interest in exploiting new computing paradigms which go beyond
the usual, classical one. Among these paradigms, quantum computing
plays an important role. In particular, the quantum paradigm is having
a deep impact on the notion of a computationally (in)tractable
problem~\cite{Shor97}.

Even if quantum computing has catalysed the interest of a quite large
scientific community, several theoretical aspects are still
unexplored. As an example, the definition of a robust theoretical
framework for quantum programming is nowadays still a challenge. A
number of (paradigmatic) calculi for quantum computing have been
introduced in the last ten years. Among them, some functional calculi,
typed and untyped, have been
proposed~\cite{mscs2009,tcs2010,entcs11,SV06,vT04,Zorzi13}, but we are
still at a stage where it is not clear whether one calculus could be
considered \emph{canonical}. Since quantum data have to undergo
restrictions such as no-cloning and no-erasing, it is not surprising
that in most of the cited quantum calculi the use of resources is
controlled. Linear logic therefore provides an ideal framework for
quantum data treatment, since weakening and contraction (to which
linear logic gives a special status) precisely correspond to erasing
and copying via the Curry-Howard correspondence. But linear logic also
offers another tool which has not been widely exploited in the quantum
setting: its mathematical model in terms of operator algebras,
i.e. the Geometry of Interaction (GoI in the following).  Indeed, the
latter provides a dynamical interpretation and a semantic account of
the cut-elimination procedure as a flow of information circulating
into a net structure. This idea can be formulated both as an algebra
of bounded operators on a infinite-dimensional Hilbert
space~\cite{JYG89} or as a token-based
machine~\cite{GA92,Mackie95}. On the one hand, the Hilbert space on
top of which the first formulation of GoI is given is precisely the
canonical state space of a quantum Turing machine~\cite{BerVa97}. On
the other hand, the definition of a token machine provides a
mathematically simpler setting, which has already found a role in this
context~\cite{DLF11,HH11}.

In this paper, we show that token machines are also a model of a
linear quantum $\lambda$-calculus with implicit quantum bits (qubits),
called $\QL$ and defined along the lines of van Tonder's
$\lambda_q$~\cite{vT04}. This allows us to give an operational
semantics to $\QL$ which renders the quantum nature of $\QL$ explicit:
\emph{type derivations become quantum circuits on the set of gates
  occurring in the underlying $\lambda$-term}. This frees us from the
burden of having to define the operational semantics of quantum
calculi in reduction style, which is known to be technically
challenging in a similar setting~\cite{vT04}.  On the other hand, the
power of $\beta$-style axioms is retained in the form of an equational
theory for which our operational semantics can be proved sound.
Technically, the design of our token machine for $\QL$, called
$\IAM{\QL}$, is arguably more challenging than the one of classical
token machines. Indeed, the principles of quantum computing, and the
so-called \emph{entanglement} in particular, force us to go towards
\emph{wave-style} machines, i.e., to machines where more than one
particle can travel inside the program at the same time. Moreover, the
possibly many tokens at hand are subject to synchronisation points,
each one corresponding to unitary operators of arity greater than
$1$. This means that $\IAM{\QL}$, in principle, could suffer from
deadlocks, let alone the possibility of non-termination. We here prove
that these pathological situations can \emph{never} happen.  In the
present paper we also establish a soundness theorem: we state and
prove that the semantics induced by the token machine $\IAM{\QL}$ is
sound with respect to $\QL$'s equational theory, i.e. it is invariant
with respect to term equivalence. The proof, which we only sketch and
which can anyway be found in~\cite{EV}, is not trivial, since our
notion of term has to deal with quantum superposition~\cite{NieCh00}
and is thus non-standard.  Finally, it is mandatory to recall that,
even if the possibility of observing quantum data is a useful and
expressive programming tool, considering a measurement-free calculus
is a theoretically well-founded choice, since measurements can always
been postponed~\cite{NieCh00}. Thus, this is not a limitation when one
addresses computability issues.

The calculus $\QL$ and its token machine $\IAM{\QL}$ are introduced in
Section~\ref{sec:QL} and Section~\ref{sec:QLTM}, respectively. Main
results about $\IAM{\QL}$ are in Section~\ref{sect:mainres}. An
extended version of this paper with more details, proofs and a gentle
introduction to quantum computing is available~\cite{EV}.
\section{The Calculus $\QL$}\label{sec:QL}
An essential property of quantum programs is that quantum data,
i.e. quantum bits, should always be uniquely referenced. This
restriction follows from the well-known \emph{no-cloning} and
\emph{no-erasing} properties of quantum physics, which state that a
quantum bit cannot in general be duplicated nor
canceled~\cite{NieCh00}.  Syntactically, one captures this restriction
by means of linearity: if every abstraction $\abstr{x}{\termone}$ is
such that there is \emph{exactly one} free occurrence of $x$ in
$\termone$, then the substitution triggered by firing \emph{any} redex
is neither copying nor erasing and, as a consequence, coherent with
the just stated principles.  In this Section, we introduce a quantum
linear $\lambda$-calculus in the style of van Tonder's
$\lambda_q$~\cite{vT04} and give an equational theory for it.
This is the main object of study of this paper, and is the calculus for which we will give a wave-style token
machine in the coming sections.
\subsection{The Language of Terms}\label{sec:syntaxql}
Let us fix a finite set $\uset$ of \emph{unitary operators}, each on a
finite-dimensional Hilbert space $\CC^{2^n}$, where $n$ can be
arbitrary. To each such $\unopone\in\uset$ we associate a symbol
$\opone$ and call $n$ the \emph{arity} of $\opone$. The syntactic
categories of \emph{patterns}, \emph{bit constants}, \emph{constants} and
\emph{terms} are defined by the following grammar:
\begin{align*}
  \patone & ::= \varone\ \midd\ \pair{\varone}{\vartwo}; &&\mbox{\emph{patterns}}\\ 
  \cbitone & ::= \ket{b}_n; &&\mbox{\emph{bit constants}}\\ 
  \csone & ::= \cbitone\ \midd\ \opone; &&\mbox{\emph{constants}}\\ 
  \termone,\termtwo & ::= \varone \midd \csone \midd \termone\otimes\termtwo 
     \midd \termone\termtwo\ \midd\ \lambda \patone.\termone.&&\mbox{\emph{terms}}
\end{align*} 
where $n$ ranges over $\NN$, $b$ ranges over $\{0,1\}$, and
$\varone$ ranges over a denumerable, totally ordered set of variables
$\VV$. We always assume that the natural numbers occurring next to
bits in any term $\termone$ are pairwise distinct. This condition, by
the way, is preserved by substitution when the substituted variable
occurs (free) exactly once.  Whenever this does not cause ambiguity,
we elide labels and simply write $\ket{\bitone}$ for a bit
constant. Notice that pairs are formed via the binary operator
$\otimes$.

We will sometime write $\ket{\bitone_1 \bitone_2\ldots \bitone_k}$ for
$\ket{\bitone_1}\otimes\ket{\bitone_2}\otimes\ldots\otimes\ket{\bitone_k}$
(where $\bitone_1,\ldots,\bitone_n\in\{0,1\}$).
We work modulo variable renaming; in other words, terms are
equivalence classes modulo $\alpha$-conversion.  Substitution up to
$\alpha$-equivalence is defined in the usual way.  Observe that the
terms of $\QL$ are the ones of a $\lambda$-calculus with pairs (which
are accessed by pattern-matching) endowed with constants for bits and
unitary operators. We don't consider measurements here, and discuss
the possibility of extending the language of terms in
sections~\ref{sec:relw} and~\ref{sect:conclusions}.
\subsection{Judgements and Typing Rules.}\label{sec:typing}
We want terms to be non-duplicable and non-erasable by construction
and, as a consequence, we adopt a linear type discipline. Formally,
the set of types is defined as follows
$$
\typeone::=\BB\midd\typeone\linmap\typetwo\midd\typeone\otimes\typetwo,
$$ 
where $\BB$ is the ground type of bits.  We write $\BB^n$ for the
$n$-fold tensor product $\overbrace{\BB\otimes\ldots\otimes\BB}^{n
  \mbox{ {\scriptsize times}}}$.  Judgements and environments are
defined as follows:
\begin{varitemize}
\item  
  A \textit{linear environment} $\conone$ is a (possibly
  empty) finite set of assignments in the form $\varone:\typeone$. We
  impose that in a linear environment, each variable $\varone$ 
  occurs \emph{at most} once;
\item
  If $\conone$ and $\contwo$ are two linear environments assigning types
  to distinct sets of variables, $\conone,\contwo$ denotes their union;
\item 
  A \emph{judgement} is an expression 
  $\conone\vdash\termone:\typeone$, where $\conone$ is a linear  environment, 
  $\termone$ is a term, and $\typeone$ is a type in $\QL$. 
\end{varitemize}
\emph{Typing rules} are in Figure~\ref{fig:TypR}.
\begin{figure}
\begin{center}
\fbox{
\begin{minipage}{.97\textwidth}
\vspace{3pt}
  $$
  \urule{}{\jd{\typet{x}{A}}{\typet{x}{A}}}{\mathsf{(a_v)}}\quad
  \urule{}{\jd{\emcon}{\typet{\ket{0}}{\BB}}}{\mathsf{(a_{q0})}}\qquad
  \urule{}{\jd{\emcon}{\typet{\ket{1}}{\BB}}}{\mathsf{(a_{q1})}}\qquad
  \urule{}{\jd{\emcon}{U:\BB^{n}\linmap\BB^{n}}}{(\mathsf{a_U})}
   $$
   $$
   \urule {\jd{\conone,\varone:\typeone}{\typet{\termone}{\typetwo}}}
   {\jd{\conone}{\typet{\abstr{\varone}{\termone}}{\typeone\linmap\typetwo}}}
   {(\mathsf{I}_\linmap^1)} \qquad \urule
   {\jd{\conone,\varone:\typeone,\vartwo:\typetwo}{\typet{M}{\typethree}}}
   {\jd{\conone}{\typet{\abstr{\langle\varone,\vartwo\rangle}{\termone}}
       {(\typeone\otimes\typetwo)\linmap\typethree}}}
   {(\mathsf{I}_\linmap^2)}
    $$
    $$
    \brule {\jd{\conone}{\typet{M}{A\linmap B}}} {\jd{\contwo}{N: A}}
    {\jd{\conone,\contwo}{MN:B}} {(\mathsf{E}_\linmap)} \qquad \brule
    {\jd{\conone}{\typet{\termone}{\typeone}}}
    {\jd{\contwo}{\typet{\termtwo}{\typetwo}}}
    {\jd{\conone,\contwo}{\typet{\termone\otimes\termtwo}{\typeone\otimes\typetwo}}}
    {(\mathsf{I}_\otimes)}
    $$
\vspace{3pt}
\end{minipage}}
\vspace{-2ex}
\end{center}
\caption{Typing Rules.}\label{fig:TypR}
\end{figure}
Observe that contexts are treated multiplicatively and, as a
consequence, variables always appear exactly once in terms. In other
words, a \emph{strictly linear type discipline} is enforced.
\begin{example}[EPR States]\label{ex:epr}
  Consider the term
  $\termone_{\mathit{EPR}}=\abstr{\pair{\varone}{\vartwo}}{\mathit{CNOT}(\pairtens{\mathit{H}\varone}{y})}$.
  $\termone_{\mathit{EPR}}$ encodes the quantum circuit on two input
  qubits which has the ability to produce an \emph{entangled} state
  from any element of the underlying computational basis\footnote{The
    quantum circuit EPR is built out from the unitary gates $H$ (the
    so-called Hadamard gate) and $\mathit{CNOT}$. The unary gate $H$
    is able to create a superposition of elements of the computational
    basis $\ket{0}$ and $\ket{1}$, i.e. a linear combination in the
    form $\frac{1}{\sqrt{2}}(\ket{0}{+}\ket{1})$ or
    $\frac{1}{\sqrt{2}}(\ket{0}{-}\ket{1})$. The binary gate $\mathit{CNOT}$
    negates its second argument, according to the
    value of the first one.  We provide two simple examples of
    entangled and non-entangled quantum states.  The state
    $\ket{\psi}=\frac{1}{\sqrt{2}} \ket{00}+
    \frac{1}{\sqrt{2}}\ket{11}$ is entangled whereas any state
    $\phi=\alpha\ket{00}+\beta\ket{01}$ is not. In fact, it is
    possible to express the latter in the mathematically equivalent
    form $\phi=\ket{0}\otimes(\alpha\ket{0}+\beta\ket{1})$.
    See~\cite{EV} for a gentle introduction to quantum computing
    essential notions.}.  It can be given the type
  $\BB\otimes\BB\linmap\BB\otimes\BB$ in the empty context. The
  following is a type derivation $\tdone_{\mathit{EPR}}$ for it:
  $$
  \urule
      {
        \brule
            {\jd{\emcon}{\mathit{CNOT}}:\BB\otimes\BB\linmap\BB\otimes\BB}
            {
              \brule
                  {
                    \brule
                        {\jd{\emcon}{\mathit{H}}:\BB\linmap\BB}
                        {\jd{\varone:\BB}{\varone:\BB}}
                        {\jd{\varone:\BB}{\mathit{H}\varone:\BB}}
                        {(\mathsf{E}_\linmap)}
                  }
                  {\jd{\vartwo:\BB}{\vartwo:\BB}}
                  {\jd{\varone:\BB,\vartwo:\BB}{\pairtens{\mathit{H}\varone}{\vartwo}}:\BB\otimes\BB}
                  {(\mathsf{I}_\otimes)}
            }
            {\jd{\varone:\BB,\vartwo:\BB}{\mathit{CNOT}(\pairtens{\mathit{H}\varone}{\vartwo}):\BB\otimes\BB}}
            {(\mathsf{E}_\linmap)}
      }
      {
        \jd{\emcon}{\termone_{\mathit{EPR}}}:\BB\otimes\BB\linmap\BB\otimes\BB
      }
      {(\mathsf{I}_\linmap^2)}
      $$
      $\termone_{\mathit{EPR}}$ and $\tdone_{\mathit{EPR}}$ will be used as running examples in the rest of this
      paper, together with the following type derivation $\tdtwo_{\mathit{EPR}}$:
      $$
      \brule
          {\tdone_{\mathit{EPR}}\pof\jd{\emcon}{\termone_{\mathit{EPR}}}:\BB\otimes\BB\linmap\BB\otimes\BB}
          {
            \brule
                {\jd{\emcon}{\ket{0}_1:\BB}}
                {\jd{\emcon}{\ket{1}_2:\BB}}
                {\jd{\emcon}{\ket{0}_1\otimes\ket{1}_2}:\BB\otimes\BB}
                {(\mathsf{I}_\otimes)}
          }
          {\jd{\emcon}{\termone_{\mathit{EPR}}(\ket{0}_1\otimes\ket{1}_2)}:\BB\otimes\BB}
          {(\mathsf{E}_\linmap)}
          $$
\end{example}
If
$\tdone\pof\jd{\conone}{(\abstr{\varone}{\termone})\termtwo:\typeone}$,
one can build a type derivation $\tdtwo$ with conclusion
$\jd{\conone}{\subst{\termone}{\varone}{\termtwo}:\typeone}$ in a
canonical way, and similarly when
$\tdone\pof\jd{\conone}{(\abstr{\pair{\varone}{\vartwo}}{\termone})(\pairtens{\termtwo}{\termthree}):\typeone}$.
This, as expected, is a consequence of the following:
\begin{lemma}[Substitution Lemma]\label{lemma:substlemma}
If
$\tdone\pof\jd{\conone,\varone_1:\typeone_1,\ldots,\varone_n:\typeone_n}{\termone:\typetwo}$
and for every $1\leq i\leq n$ there is
$\tdtwo_i\pof\jd{\contwo_i}{\termtwo_i:\typeone_i}$, then there is a
canonically defined derivation
$\subst{\tdone}{\varone_1,\ldots,\varone_n}{\tdtwo_1,\ldots,\tdtwo_n}$
of
$\jd{\conone,\contwo_1,\ldots,\contwo_n}{\subst{\termone}{\varone_1,\ldots,\varone_n}{\termtwo_1,
    \ldots,\termtwo_n}:\typetwo}$.
\end{lemma}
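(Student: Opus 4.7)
The plan is to proceed by structural induction on the derivation $\tdone$, constructing $\subst{\tdone}{\varone_1,\ldots,\varone_n}{\tdtwo_1,\ldots,\tdtwo_n}$ by a case analysis on the last rule applied. Before starting, I would invoke $\alpha$-conversion to ensure that the bound variables of $\termone$ are disjoint from all free variables of $\termtwo_1,\ldots,\termtwo_n$ and from the domains of the $\contwo_i$, so that capture never occurs when substitutions reach the binders.

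The axiom cases are immediate: if the last rule is $(\mathsf{a_v})$, then $\termone=\varone$ and linearity forces either $n=0$ (and the derivation is unchanged) or $n=1$ with $\varone_1=\varone$ and $\conone=\emcon$, in which case the substituted derivation is just $\tdtwo_1$; the axioms $(\mathsf{a_{q0}})$, $(\mathsf{a_{q1}})$ and $(\mathsf{a_U})$ have no free variables at all, so no substitution is triggered. For the unary introduction rules $(\mathsf{I}_\linmap^1)$ and $(\mathsf{I}_\linmap^2)$, the premise still contains $\varone_1:\typeone_1,\ldots,\varone_n:\typeone_n$ in its context (none of them is the freshly bound pattern variable thanks to $\alpha$-renaming), so one applies the induction hypothesis to the premise with the same family $\tdtwo_1,\ldots,\tdtwo_n$ and closes with the same abstraction rule.

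The genuinely interesting cases are the binary rules $(\mathsf{E}_\linmap)$ and $(\mathsf{I}_\otimes)$. Here the key observation, which is where linearity pays off, is that the typing contexts are split multiplicatively, so each variable $\varone_i$ appears in the context of \emph{exactly one} of the two premises. This determines a partition $\{1,\ldots,n\}=I\uplus J$, and one applies the induction hypothesis separately to the left premise with the $\tdtwo_i$ for $i\in I$ and to the right premise with the $\tdtwo_j$ for $j\in J$, then combines the results with the same binary rule; the resulting context is $\conone,\contwo_1,\ldots,\contwo_n$, as required, since the $\contwo_i$ have pairwise disjoint domains by the same $\alpha$-renaming convention extended to them.

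The main obstacle, and the only place where extra care is needed, is precisely this bookkeeping of contexts in the binary rules: one must verify that the partition $I\uplus J$ exists and is unique, which relies on the \emph{at most once} condition on linear environments built into the system, and that the disjoint-union convention on the $\contwo_i$ is preserved through the whole induction. Once this is settled, canonicity of $\subst{\tdone}{\varone_1,\ldots,\varone_n}{\tdtwo_1,\ldots,\tdtwo_n}$ follows because at each inductive step the rule to apply and the way the inductive hypotheses are combined are dictated uniquely by the shape of $\tdone$.
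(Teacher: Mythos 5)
Your proposal is correct and is exactly the argument the paper has in mind: the paper's own proof is the one-line remark ``just proceed by the usual, simple induction on $\tdone$,'' and your case analysis (axioms, abstraction rules via the induction hypothesis, and the multiplicative context split in $(\mathsf{E}_\linmap)$ and $(\mathsf{I}_\otimes)$ determining a unique partition of $\{1,\ldots,n\}$) is the standard way to carry that induction out. No discrepancies.
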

\begin{proof}
Just proceed by the usual, simple induction on $\tdone$.
\end{proof}
\subsection{An Equational Theory.}\label{sect:equtheo}
The $\lambda$-calculus is usually endowed with notions of reduction or
equality, both centred around the $\beta$-rule, according to which a
function $\abstr{\varone}{\termone}$ applied to an argument $\termtwo$
\emph{reduces to} (or \emph{can be considered equal to}) the term
$\termone\{\termtwo/\varone\}$ obtained by replacing all free
occurrences of $\varone$ with $\termtwo$. A reduction relation
implicitly provides the underlying calculus with a notion of
computation, while an equational theory is more akin to a reasoning
technique. Giving a reduction relation on $\QL$ terms directly,
however, is problematic. What happens when a $n$-ary unitary operator
$U$ is faced with an $n$-tuple of qubits $\ket{b_1\ldots b_n}$?
Superposition should somehow arise, but how can we capture it?

In this section, an equational theory for $\QL$ will be introduced.
In the next sections, we will show that the semantics induced by token
machines is \emph{sound} with respect to it. The equational theory we
are going to introduce will be a binary relation on formal, weighted
sums of $\QL$ terms:
\begin{definition}[Superposed Term] 
  A \emph{superposed term} of type $(\conone,\typeone)$ is a formal
  sum
  $$
  \sutone=\sum_{i=1}^n\kappa_i\termone_i
  $$ 
  where for every $1\leq i\leq n$, $\kappa_i\in\CC$ and there is
  $\tdone_i$ such that that
  $\tdone_i\pof\jd{\conone}{\typet{\termone_i}{\typeone}}$. In this
  case, we write $\jd{\conone}{\sutone:\typeone}$.
\end{definition}
Superposed terms will be denoted by metavariables like $\sutone$ or
$\suttwo$. Please observe that terms in a superposed term
are \emph{uniformly} typed, i.e., they can be given the same type in
the same context. Please, notice that:
\begin{varitemize}
\item 
  If $\tdone\pof\jd{\emcon}{U\ket{\bitone_1\ldots\bitone_k}:\BB^k}$, then
  there is a naturally defined superposed term of type $(\emcon,\BB^k)$
  which is nothing more than the element of $\CC^{2^k}$ obtained
  by applying $\unopone$ to the vector $\ket{\bitone_1\ldots\bitone_k}$
  With a slight abuse of notation, this superposed term will be indicated
  with $\unopone\ket{\bitone_1\ldots\bitone_k}$.
\item
  All the term constructs can be generalised to operators on
  superposed terms, with the proviso that the types match. As an
  example if $\sutone=\sum_{i}\alpha_{i}\termone_{i}$ where
  $\tdone_i\pof\jd{\conone}{\typet{\termone_i}{\typeone\linmap\typetwo}}$
  and $\tdtwo\pof\jd{\contwo}{\termtwo:\typeone}$, then
  $\sutone\termtwo$ denotes the superposed term
  $\suttwo=\sum_{i}\alpha_{i}(\termone_{i}\termtwo)$. Indeed, there exist type
  derivations
  $\tdthree_i\pof\jd{\conone,\contwo}{\termone_i\termtwo:\typetwo}$
  each obtained applying the rule $(\mathsf{E}_\linmap)$ to $\tdone_i$
  and $\tdtwo$.
\end{varitemize}
It is now time to define our equational theory, which will be defined
on superposed terms of the same type. Formally, $\eqterm$ is a binary
relation on superposed terms, indexed by contexts and types. The fact
that $\sutone\eqterm_{\conone,\typeone}\suttwo$ is indicated with
$\jd{\conone}{\sutone\eqterm\suttwo:\typeone}$.  The relation
$\eqterm$ is defined inductively, by the rules in
Figure~\ref{fig:eqth}.
Notice that for each $\conone,\typeone$, $\eqterm_{\conone,\typeone}$
is by construction an equivalence relation.
\begin{figure}
  \fbox{
  \begin{minipage}{.97\textwidth}
    \vspace{5pt}
    \begin{center}
      \textbf{Axioms}
      \vspace{-2ex}
    \end{center}
    $$
    \urule{\jd{\conone}{(\abstr{\pair{\varone}{\vartwo}}{\termone})(\termtwo\otimes\termthree):\typeone}}
          {\jd{\conone}{(\abstr{\pair{\varone}{\vartwo}}{\termone})(\termtwo\otimes\termthree)\eqterm\subst{\termone}{\varone,\vartwo}{\termtwo,\termthree}:\typeone}}{\mathsf{beta.pair}}
    $$
    $$
    \urule{\jd{\conone}{(\abstr{\varone}{\termone})\termtwo:\typeone}}
          {\jd{\conone}{(\abstr{\varone}{\termone})\termtwo\eqterm\subst{\termone}{\varone}{\termtwo}:\typeone}}{\mathsf{beta}}
    \qquad\qquad    
    \urule{\jd{\emcon}{U\ket{\bitone_1\ldots\bitone_k}:\BB^k}}
          {\jd{\emcon}{U\ket{\bitone_1\ldots\bitone_k}\eqterm\unopone\ket{\bitone_1\ldots\bitone_k}:\BB^k}}{\mathsf{quant}}
    $$
    \vspace{5pt}
    \begin{center}
      \textbf{Context Closure}
      \vspace{-2ex}
    \end{center}
    $$
    \urule
        {\begin{array}{c}
            \jd{\conone}{\sutone\eqterm\suttwo:\typeone\linmap\typetwo}\\
            \jd{\contwo}{\termone:\typeone}
          \end{array}
        }
        {\jd{\conone,\contwo}{\sutone\termone\eqterm\suttwo\termone:\typetwo}}
        {\mathsf{l.a}}
   \qquad
   \urule
       {\begin{array}{c}
           \jd{\conone}{\termone:\typeone\linmap\typetwo}\\
           \jd{\contwo}{\sutone\eqterm\suttwo:\typeone}
        \end{array}}
       {\jd{\conone,\contwo}{\termone\sutone\eqterm\termone\suttwo:\typetwo}}
       {\mathsf{r.a}}
   \qquad
   \urule
       {\jd{\conone,\varone:\typeone}{\sutone\eqterm\suttwo:\typetwo}}
       {\jd{\conone}{\abstr{\varone}{\sutone}\eqterm\abstr{\varone}{\suttwo}:\typeone\linmap\typetwo}}
       {\mathsf{in.}\lambda}
   $$
   $$
   \urule
      {\jd{\conone,\varone:\typeone,\vartwo:\typetwo}{\sutone\eqterm\suttwo:\typethree}}
      {\jd{\conone}{\abstr{\pair{\varone}{\vartwo}}{\sutone}\eqterm\abstr{\pair{\varone}{\vartwo}}{\suttwo}:\typeone\otimes\typetwo\linmap\typethree}}
      {\mathsf{in.}\lambda.\mathsf{pair}}
   \qquad
   \urule
        {\begin{array}{c}
            \jd{\conone}{\sutone\eqterm\suttwo:\typeone}\\
            \jd{\contwo}{\termone:\typetwo}
          \end{array}
        }
        {\jd{\conone,\contwo}{\sutone\otimes\termone\eqterm\suttwo\otimes\termone:\typeone\otimes\typetwo}}
        {\mathsf{l.in.tens}}
   $$
   $$
      \urule
       {\begin{array}{c}
           \jd{\conone}{\termone:\typeone}\\
           \jd{\contwo}{\sutone\eqterm\suttwo:\typetwo}
        \end{array}}
       {\jd{\conone,\contwo}{\termone\otimes\sutone\eqterm\termone\otimes\suttwo:\typeone\otimes\typetwo}}
      {\mathsf{r.in.tens}}
   \qquad
      \urule
      {\begin{array}{c}
          \jd{\conone}{\sutone\eqterm\suttwo:\typeone}\\
          \jd{\conone}{\sutthree:\typeone}
       \end{array}}
      {\jd{\conone}{\alpha\sutone+\sutthree\eqterm\alpha\suttwo+\sutthree:\typeone}}
      {\mathsf{sum}}
   $$
    \vspace{5pt}
    \begin{center}
      \textbf{Reflexive, Symmetric and Transitive Closure}
      \vspace{-4ex}
    \end{center}
    \begin{align*}
      \urule
      {\jd{\conone}{\sutone:\typeone}}
      {\jd{\conone}{\sutone\eqterm\sutone:\typeone}}
      {\mathsf{refl}}
      & &&
      \urule
      {\jd{\conone}{\sutone\eqterm\suttwo:\typeone}}
      {\jd{\conone}{\suttwo\eqterm\sutone:\typeone}}
      {\mathsf{sym}}
      &
      \urule
      {\begin{array}{c}
          \jd{\conone}{\sutone\eqterm\suttwo:\typeone}\\
          \jd{\conone}{\suttwo\eqterm\sutthree:\typeone}
       \end{array}}
      {\jd{\conone}{\sutone\eqterm\sutthree:\typeone}}
      {\mathsf{trans}}
    \end{align*}
    \vspace{3pt}
  \end{minipage}}
  \caption{Equational Theory}\label{fig:eqth}
\end{figure}
\begin{example}
As an example, consider the term
$\termone_{\mathit{EPR}}(\ket{0}_1\otimes\ket{1}_2)$ from
Example~\ref{ex:epr}. The equations in the following chain
can all be derived through axioms and context closure rules:
\begin{align*}
\termone_{\mathit{EPR}}(\ket{0}\otimes\ket{1})&\eqterm\mathit{CNOT}(\pairtens{\mathit{H}\ket{0}}{\ket{1}})
     \eqterm\frac{1}{\sqrt{2}}\mathit{CNOT}(\pairtens{\ket{0}}{\ket{1}})+\frac{1}{\sqrt{2}}\mathit{CNOT}(\pairtens{\ket{1}}{\ket{1}})\\
     &\eqterm\frac{1}{\sqrt{2}}\pairtens{\ket{0}}{\ket{1}}+\frac{1}{\sqrt{2}}\mathit{CNOT}(\pairtens{\ket{1}}{\ket{1}})
     \eqterm\frac{1}{\sqrt{2}}\pairtens{\ket{0}}{\ket{1}}+\frac{1}{\sqrt{2}}\pairtens{\ket{1}}{\ket{0}}.
\end{align*}
The context (which is $\emcon$) and the type (which is $\BB^2$) have
been elided for the sake of readability. By rule $\mathsf{trans}$,
we can derive that
$$
\jd{\emcon}{\mathit{EPR}(\ket{0}\otimes\ket{1})\eqterm\frac{1}{\sqrt{2}}\pairtens{\ket{0}}{\ket{1}}+\frac{1}{\sqrt{2}}\pairtens{\ket{1}}{\ket{0}}:\BB^2}.
$$ 
In other words, $\mathit{EPR}$, when fed with $\ket{0}\otimes\ket{1}$,
produces an entangled pair of qubits. The fourth superposed term in
the chain above has the remarkable property of not being
\emph{homogeneous}, i.e., of being the sum of two terms which are not
identical up to the value of bit constants.
\end{example}
Please observe that the equational theory we have just defined can
\emph{hardly} be seen as an operational semantics for $\QL$. Although
equations can of course be oriented, it is the very nature of a
superposed term which is in principle problematic from the point of
view of quantum computation: what is the mathematical nature of a
superposed term? Is it an element of an Hilbert Space?  And if so, of
\emph{which one}?  If we consider a simple language such as $\QL$, the
questions above may appear overly rhetorical, but we claim they are
not.  For example, what would be the quantum meaning of linear
beta-reduction? If we want to design beta-reduction according to the
principles of quantum computation, it has to be, at least, easily
reversible (unless measurement is implicit in it).  Moving towards
more expressive languages, this non-trivial issue becomes more
difficult and a number of constraints have to be imposed (for example,
superposition of terms can be allowed, but only between homogenous
terms~\cite{vT04}).  This is the reason for which promising
calculi~\cite{vT04} fail to be canonical models for quantum
programming languages.  This issue has been faced in literature
without satisfactory answers, yielding a number of convincing
arguments in favour of the (implicit or explicit) classical control of
quantum data~\cite{mscs2009,SV06}. As observed above, our equational
theory permits non-homogeneous superposed terms in a very natural
way.
\section{A Token Machine for $\QL$}\label{sec:QLTM}
In this section we describe an interpretation of $\QL$ type
derivations in terms of a specific token machine called $\IAM{\QL}$.
Before formally defining $\IAM{\QL}$, it is necessary to give some
preliminary concepts.

With a slight abuse of notation, a permutation
$\permone:\{1,\ldots,n\}\rightarrow\{1,\ldots,n\}$ will be often
applied to sequences of length $n$ with the obvious meaning:
$\permone(a_1,\ldots,a_n)=a_{\permone(1)},\ldots,a_{\permone(n)}$.
Similarly, such a permutation can be seen as the \emph{unique} unitary
operator on $\CC^{2^n}$ which sends $\ket{b_1\cdots b_n}$ to
$\ket{b_{\permone(1)}\cdots b_{\permone(n)}}$.  

Suppose given a unitary operator $\unopone\in\uset$ of arity
$n\in\NN$. Now, take a natural number $m\geq n$ and $n$ distinct
natural numbers $j_1,\ldots,j_n$, all of them smaller or equal to
$m$. With $\unopone_m^{j_1,\ldots,j_n}$ (or simply with
$\unopone^{j_1,\ldots,j_n}$) we indicate the unitary operator of arity
$m$ which acts like $\unopone$ on the quantum bits indexed with
$j_1,\ldots,j_n$ and leave all the other qubits unchanged.  In the
following, with a slight abuse of notation, occurrences of types in
type derivations are confused with types themselves. On the other
hand, occurrences of types \emph{inside other types} will be defined
quite precisely, as follows.

\emph{Contexts} (types with a hole) are denoted by metavariables like
$\ctone$ or $\cttwo$. A context $\ctone$ is said to be \emph{a context
  for a type $\typeone$} if $\ctone[\BB]=\typeone$.  \emph{Negative
  contexts} (i.e., contexts where the hole is in negative position)
are denoted by metavariables like $\nconone,\ncontwo$.
\emph{Positive} ones are denoted by metavariables like
$\pconone,\pcontwo$.  An \emph{occurrence} of $\BB$ in the type
derivation $\tdone$ is a pair $\occone=(\typeone,\ctone)$, where
$\typeone$ is an occurrence of a type in $\tdone$ and $\ctone$ is a
context for $\typeone$.  Sequences of occurrences are indicated with
metavariables like $\soccone,\socctwo$ (possibly indexed). All
sequences of occurrences we will deal with do not contain
duplicates. Type constructors $\linmap$ and $\tens$ can be generalised
to operators on occurrences and sequences of occurrences, e.g.
$(\typeone,\ctone)\linmap\typetwo$ is just
$(\typeone\linmap\typetwo,\ctone\linmap\typetwo)$. If a sequence
of occurrences $\soccone$ contains the occurrences $\occone_1,\ldots,\occone_n$,
we emphasise it by indicating it with $\soccone(\occone_1,\ldots,\occone_n)$.

Given (an occurrence of) a type $\typeone$, all positive and negative
occurrences of $\BB$ inside $\typeone$ form sequences called
$\poccs{\typeone}$ and $\noccs{\typeone}$, respectively. These
are defined as follows (where $\cdot$ is sequence concatenation):
\begin{align*}
  \poccs{\BB}&=(\BB,\emct);\\
  \noccs{\BB}&=\emseq;\\
  \poccs{\typeone\tens\typetwo}&=(\poccs{\typeone}\tens\typetwo)\cdot(\typeone\tens\poccs{\typetwo});\\
  \noccs{\typeone\tens\typetwo}&=(\noccs{\typeone}\tens\typetwo)\cdot(\typeone\tens\noccs{\typetwo});\\
  \poccs{\typeone\linmap\typetwo}&=(\noccs{\typeone}\linmap\typetwo)\cdot(\typeone\linmap\poccs{\typetwo});\\
  \noccs{\typeone\linmap\typetwo}&=(\poccs{\typeone}\linmap\typetwo)\cdot(\typeone\linmap\noccs{\typetwo}).
\end{align*} 
\begin{example}
As an example, the
positive occurrences in the type $\BB\linmap\BB\tens\BB$ should be the
two rightmost ones. And, indeed,
\begin{align*}
\poccs{\BB\linmap\BB\tens\BB}
&=(\noccs{\BB}\linmap\BB\tens\BB)\cdot(\BB\linmap\poccs{\BB\tens\BB})\\
&=\emseq\cdot(\BB\linmap\poccs{\BB\tens\BB})\\
&= \BB\linmap\poccs{\BB\tens\BB}\\
&=(\BB\linmap(\poccs{\BB}\tens\BB))\cdot(\BB\linmap(\BB\tens\poccs{\BB}))\\
&=(\BB,\BB\linmap(\emct\tens\BB)),(\BB,\BB\linmap(\BB\tens\emct)).
\end{align*}
Similarly, one can prove that $\noccs{\BB\linmap\BB\tens\BB}=(\BB,\emct\linmap(\BB\tens\BB))$.
\end{example}
For every type derivation $\tdone$, $\bitocc{\tdone}$ is the sequence
of all these occurrences of $\BB$ in $\tdone$ which are introduced by
the rules $(\mathsf{a_{q0}})$ and $(\mathsf{a_{q1}})$ (recall
Figure~\ref{fig:TypR}). Similarly, $\bitval{\tdone}$ is the
corresponding sequence of binary digits, seen as a vector in
$\CC^{2^{|\bitocc{\tdone}|}}$.  Both in $\bitocc{\tdone}$ and in
$\bitval{\tdone}$, the order is the one induced by the natural number
labeling the underlying bit in $\tdone$. 
\begin{example}
Consider the following type derivation, and call it ${\tau}$:
$$
\brule
    {\jd{\emcon}{\ket{0}_2:\BB_1}}
    {\jd{\emcon}{\ket{1}_1:\BB_2}}
    {\jd{\emcon}{\ket{0}_2\otimes\ket{1}_1}:\BB_3\otimes\BB_4}
    {(\mathsf{I}_\otimes)}
$$ 
There are four occurrences of $\BB$ in it, and we have indexed it with
the first four positive natural numbers, just to be able to point at
them without being forced to use the formal, context machinery. Only
two of them, namely the upper ones, are introduced by instances of the
rules $(\mathsf{a_{q0}})$ and $(\mathsf{a_{q1}})$. Moreover, the
rightmost one serves to type a bit having an index (namely $1$)
greater than the one in the other instance (namely $2$). As a
consequence, $\bitocc{\tau}$ is the sequence $\BB_2,\BB_1$. The two
instances introduces bits $0$ and $1$; then
$\bitval{\tdone}=\ket{1}\otimes\ket{0}$.  As another example, one can
easily compute $\bitocc{\tdone_{\mathit{EPR}}}$ and
$\bitval{\tdone_{\mathit{EPR}}}$ (where $\tdone_{\mathit{EPR}}$ is
from Example~\ref{ex:epr}), finding out that both are the empty
sequence.
\end{example}
We are finally able to define, for every type derivation $\tdone$, the
abstract machine $\autom{\tdone}$ interpreting it:
\begin{varitemize}
\item 
  The \emph{states} of $\autom{\tdone}$ form a set
  $\states{\tdone}$ and are in the form
  $\stone=(\occone_1,\ldots,\occone_n,\qrone)$ where:
  \begin{varitemize}
  \item 
    $\occone_1,\ldots,\occone_n$ are occurrences of the type $\BB$
    in $\tdone$;
  \item 
    $\qrone$ is a \emph{quantum register} on $n$ quantum bits, i.e. a
    normalised vector in $\CC^{2^n}$.
  \end{varitemize}
\item 
  The \emph{transition relation}
  $\trans{\tdone}\subseteq\states{\tdone}\times\states{\tdone}$ is
  defined based on $\tdone$, following the rules from
  Figure~\ref{fig:transone}. In the last rule, $\BB$ in the type of
  $U$ is simply denoted through its index, and for every $1\leq k\leq
  m$, $i_k$ is the position of $\BB_k$ in the sequence $\soccone$.
  The transition rules induced by $(\mathsf{I}_\linmap^2)$
  have been elided for the sake of simplicity
  (see~\cite{EV}).
\end{varitemize}
The number of positive (negative, respectively) occurrences of $\BB$
in the conclusion of $\tdone$ is said to be the \emph{output arity}
(the \emph{input arity}, respectively) of $\tdone$.  Suppose, for the
sake of simplicity, that $\tdone$ is a type derivation of
$\jd{\emcon}{\termone:\typeone}$.  An \emph{initial state} for a
quantum register $\qrone$ is a state in the form
$(\noccs{\typeone}\cdot\bitocc{\tdone},\qrone\otimes\bitval{\tdone})$.
Given a permutation $\permone$ on $n$ elements, a \emph{final state}
for a quantum register $\qrone$ \emph{and} $\permone$ is a state in
the form $(\soccone,\qrone)$, where
$\soccone=\permone(\poccs{\typeone})$. A \emph{run} of $\autom{\tdone}$
is simply a finite or infinite sequence $\stone_1,\stone_2,\ldots$ 
of states from $\states{\tdone}$ such that $\stone_i\trans{\tdone}\stone_{i+1}$
for every $i$.
\begin{figure}
\begin{center}
\fbox{
\begin{minipage}{.97\textwidth}
{\scriptsize
\begin{center}
\begin{tabular}{cc}
\begin{minipage}{2cm}
$$
\urule{}{\jd{\typet{x}{\typeone_1}}{\typet{x}{\typeone_2}}}{(\mathsf{a}_{\mathsf{v}})}
$$
\end{minipage}
&
\begin{minipage}{6.5cm}
\begin{eqnarray*}
((\soccone,(\typeone_1,\pconone),\socctwo),\qrone) &\trans{\tdone}& 
  ((\soccone,(\typeone_2,\pconone),\socctwo),\qrone)\\
((\soccone,(\typeone_2,\nconone),\socctwo),\qrone) &\trans{\tdone}& 
  ((\soccone,(\typeone_1,\nconone),\socctwo),\qrone)
\end{eqnarray*}
\end{minipage}
\end{tabular}

\vspace{1pt}

\begin{tabular}{cc}
\begin{minipage}{3.5cm}
$$
\urule
  {\jd{\conone_1,\varone:\typeone_1}{\typet{\termone}{\typetwo_1}}}
  {\jd{\conone_2}{\typet{\abstr{\varone}{\termone}}{\typeone_2\linmap\typetwo_2}}}
  {(\mathsf{I}^1_\linmap)}
$$
\end{minipage}
&
\begin{minipage}{7cm}
$$
\begin{array}{c}
((\soccone,(\typeone_1,\nconone),\socctwo),\qrone) \trans{\tdone} 
  ((\soccone,(\typeone_2\linmap\typetwo_2,\nconone\linmap\typetwo_2),\socctwo),\qrone)\\ \vspace{-7pt} \\
((\soccone,(\typeone_2\linmap\typetwo_2,\pconone\linmap\typetwo_2),\socctwo),\qrone) \trans{\tdone}
  ((\soccone,(\typeone_1,\pconone),\socctwo),\qrone)\\ \vspace{-7pt} \\
((\soccone,(\typetwo_1,\pconone),\socctwo),\qrone) \trans{\tdone} 
  ((\soccone,(\typeone_2\linmap\typetwo_2,\typeone_2\linmap\pconone),\socctwo),\qrone)\\ \vspace{-7pt} \\
((\soccone,(\typeone_2\linmap\typetwo_2,\typeone_2\linmap\nconone),\socctwo),\qrone) \trans{\tdone} 
  ((\soccone,(\typetwo_1,\nconone),\socctwo),\qrone)\\ \vspace{-7pt} \\
((\soccone,(\conone_2,\pconone),\socctwo),\qrone) \trans{\tdone}
  ((\soccone,(\conone_1,\pconone),\socctwo),\qrone)\\ \vspace{-7pt} \\
((\soccone,(\conone_1,\nconone),\socctwo),\qrone) \trans{\tdone} 
  ((\soccone,(\conone_2,\nconone),\socctwo),\qrone)\\ \vspace{-7pt} \\
\end{array}
$$
\end{minipage}
\end{tabular}

\vspace{1pt}

\begin{tabular}{cc}
\begin{minipage}{4.5cm}
$$
\brule
    {\jd{\conone_1}{\typet{\termone}{\typeone_1\linmap\typetwo_1}}}
    {\jd{\contwo_1}{\termtwo:\typeone}_2}
    {\jd{\conone_2,\contwo_2}{\termone\termtwo:\typetwo_2}}
    {(\mathsf{E}_\linmap)}
$$
\end{minipage}
&
\begin{minipage}{7cm}
$$
\begin{array}{c}
((\soccone,(\typeone_2,\pconone),\socctwo),\qrone) \trans{\tdone} 
  ((\soccone,(\typeone_1\linmap\typetwo_1,\pconone\linmap\typetwo_1),\socctwo),\qrone)\\ \vspace{-7pt} \\
((\soccone,(\typeone_1\linmap\typetwo_1,\nconone\linmap\typetwo_1),\socctwo),\qrone) \trans{\tdone} 
  ((\soccone,(\typeone_2,\nconone),\socctwo),\qrone)\\ \vspace{-7pt} \\
((\soccone,(\typeone_1\linmap\typetwo_1,\typeone_1\linmap\pconone),\socctwo),\qrone) \trans{\tdone} 
  ((\soccone,(\typetwo_2,\pconone),\socctwo),\qrone)\\ \vspace{-7pt} \\
((\soccone,(\typetwo_2,\nconone),\socctwo),\qrone) \trans{\tdone} 
  ((\soccone,(\typeone_1\linmap\typetwo_1,\typeone\linmap\nconone),\socctwo),\qrone)\\ \vspace{-7pt} \\
((\soccone,(\conone_2,\pconone),\socctwo),\qrone) \trans{\tdone} 
  ((\soccone,(\conone_1,\pconone),\socctwo),\qrone)\\ \vspace{-7pt} \\
((\soccone,(\conone_1,\nconone),\socctwo),\qrone) \trans{\tdone} 
  ((\soccone,(\conone_2,\nconone),\socctwo),\qrone)\\ \vspace{-7pt} \\
((\soccone,(\contwo_2,\pconone),\socctwo),\qrone) \trans{\tdone} 
  ((\soccone,(\contwo_1,\pconone),\socctwo),\qrone)\\ \vspace{-7pt} \\
((\soccone,(\contwo_1,\nconone),\socctwo),\qrone) \trans{\tdone} 
  ((\soccone,(\contwo_2,\nconone),\socctwo),\qrone)
\end{array}
$$
\end{minipage}
\end{tabular}
\vspace{2pt}
\begin{tabular}{cc}
\begin{minipage}{4.25cm}
$$
\brule
    {\jd{\conone_1}{\typet{\termone}{\typeone_1}}}
    {\jd{\contwo_1}{\typet{\termtwo}{\typetwo_1}}}
    {\jd{\conone_2,\contwo_2}{\typet{\termone\otimes\termtwo}{\typeone_2\otimes\typetwo_2}}}
    {(\mathsf{I}_\otimes)}
$$
\end{minipage}
&
\begin{minipage}{7.25cm}
$$
\begin{array}{c}
((\soccone,(\typeone_2\otimes\typetwo_2,\nconone\otimes\typetwo_2),\socctwo),\qrone) \trans{\tdone} 
((\soccone,(\typeone_1,\nconone),\socctwo),\qrone)\\ \vspace{-7pt} \\
((\soccone,(\typeone_2\otimes\typetwo_2,\typeone_2\otimes\nconone),\socctwo),\qrone) \trans{\tdone} 
((\soccone,(\typetwo_1,\nconone),\socctwo),\qrone)\\ \vspace{-7pt} \\
((\soccone,(\typeone_1,\pconone),\socctwo),\qrone) \trans{\tdone} 
  ((\soccone,(\typeone_2\otimes\typetwo_2,\pconone\otimes\typetwo_2),\socctwo),\qrone)\\ \vspace{-7pt} \\
((\soccone,(\typetwo_1\pconone),\socctwo),\qrone) \trans{\tdone} 
  ((\soccone,(\typeone_2\otimes\typetwo_2,\typeone_2\otimes\pconone),\socctwo),\qrone)\\ \vspace{-7pt} \\
((\soccone,(\conone_1,\nconone),\socctwo),\qrone) \trans{\tdone} 
  ((\soccone,(\conone_2,\nconone),\socctwo),\qrone)\\ \vspace{-7pt} \\
((\soccone,(\contwo_1,\nconone),\socctwo),\qrone) \trans{\tdone} 
  ((\soccone,(\contwo_2,\nconone),\socctwo),\qrone)\\ \vspace{-7pt} \\
((\soccone,(\conone_2,\pconone),\socctwo),\qrone) \trans{\tdone} 
  ((\soccone,(\conone_1,\pconone),\socctwo),\qrone)\\ \vspace{-7pt} \\
((\soccone,(\contwo_2,\pconone),\socctwo),\qrone) \trans{\tdone} 
  ((\soccone,(\contwo_1,\pconone),\socctwo),\qrone)
\end{array}
$$
\end{minipage}
\end{tabular}

\vspace{6pt}

\begin{tabular}{cc}
\begin{minipage}{4.5cm}
$$
\urule{}{\jd{\emcon}{U:\BB_1\otimes\ldots\otimes\BB_m\linmap\BB_{m+1}\otimes\ldots\otimes\BB_{2m}}}{(\mathsf{a}_{\mathsf{U}})}
$$
\end{minipage}
&
\begin{minipage}{7cm}
$$
\begin{array}{c}
(\soccone(\BB_1,\ldots,\BB_m),\qrone)\\
\trans{\tdone}\\
\qquad(\soccone(\BB_{m+1},\ldots,\BB_{2m}),\unopone^{i_1,\ldots,i_m}(\qrone))
\end{array}
$$
\end{minipage}
\end{tabular}

\vspace{8pt}

\end{center}}
\end{minipage}}
\end{center}
\caption{$\IAM{\QL}$\ Transition Rules}\label{fig:transone}
\end{figure}
\begin{example}[A run of $\IAM{\QL}$]\label{ex:runIAMQ} 
  Consider the term $\termone_{\mathit{EPR}}$ and its type derivation
$\tdone_{\mathit{EPR}}$ (see Example~\ref{ex:epr}).  Forgetting about
terms and marking different occurrences of $\BB$ with distinct
indices, we obtain:
$$
{
\urule
{
  \brule
  {\jd{\emcon}{}\BB_{9}\otimes\BB_{10}\linmap\BB_{11}\otimes\BB_{12}}
  {
    \brule
    {
      \brule
      {\jd{\emcon}{}:\BB_{21}\linmap\BB_{22}}
      {\jd{\BB_{23}}{\BB_{24}}}
      {\jd{\BB_{17}}{\BB_{18}}}
      {(\mathsf{E}_\linmap)}
    }
    {\jd{\BB_{19}}{\BB_{20}}}
    {\jd{\BB_{13},\BB_{14}}\BB_{15}\otimes\BB_{16}}
    {(\mathsf{I}_\otimes)}
  }
  {\jd{\BB_{5},\BB_{6}}{{}{}\BB_{7}\otimes\BB_{8}}}
  {(\mathsf{E}_\linmap)}
}
{
  \jd{\emcon}{}\BB_{1}\otimes\BB_{2}\linmap\BB_{3}\otimes\BB_{4}
}
{(\mathsf{I}_\linmap^2)}
}
$$ 
Let us consider the following computation of
$\autom{\tdone_{\mathit{EPR}}}$:
\begin{align*}
  (\BB_1,&\BB_2,\qrone)\trans{\tdone}^*(\BB_{5},\BB_{6},\qrone)\trans{\tdone}^*(\BB_{13},\BB_{14},\qrone)
  \trans{\tdone}(\BB_{17},\BB_{19},\qrone)\trans{\tdone}^*(\BB_{23},\BB_{20},\qrone)\\
  &\trans{\tdone}^*(\BB_{24},\BB_{16})\trans{\tdone}(\BB_{24},\BB_{10},\qrone)\trans{\tdone}(\BB_{21},\BB_{10},\qrone)
  \trans{\tdone}(\BB_{22},\BB_{10},\mathbf{H}^1(\qrone))\\
  &\trans{\tdone}(\BB_{18},\BB_{10},\mathbf{H}^1(\qrone))\trans{\tdone}(\BB_{15},\BB_{10},\mathbf{H}^1(\qrone))
  \trans{\tdone}(\BB_{9},\BB_{10},\mathbf{H}^1(\qrone))\\
  &\trans{\tdone}(\BB_{11},\BB_{12},{\mathbf{CNOT}}^{1,2}(\mathbf{H}^1(\qrone)))\trans{\tdone}^*(\BB_{7},\BB_{8},\mathbf{CNOT}^{1,2}(\mathbf{H}^1(\qrone)))\\
  &\trans{\tdone}(\BB_{3},\BB_{4},\mathbf{CNOT}^{1,2}(\mathbf{H}^1(\qrone))).
\end{align*}
Notice that the occurrence of $\mathit{CNOT}$ acts as a
synchronisation operator: the second token is stuck at the occurrence
$\BB_{10}$ until the first token arrives (from the occurrence
$\BB_{15}$) as a control input of the $\mathit{CNOT}$ and the
corresponding reduction step actually occurs.
\end{example}
What the example above shows, indeed, is that the presence of a
potential \emph{entanglement} in $\tdone$ is intimately related to the
necessity of \emph{synchronisation} in the underlying machine
$\autom{\tdone}$: if all unitary operators in $\tdone$ can be
expressed as the tensor product of unitary operators of arity
one (and, thus, entanglement is not possible), then synchronisation
is simply not necessary.

Given a type derivation $\tdone$, the relation $\trans{\tdone}$ enjoys
a strong form of confluence:
\begin{proposition}[One-step Confluence of $\trans{\tdone}$]
  Let $\stone, \sttwo, \stthree\in \states{\tdone}$ be such that
  $\stone\trans{\tdone} \sttwo$ and $\stone\trans{\tdone}\stthree$.
  Then either $\sttwo=\stthree$ or there exists a state $\stfour$ such
  that $\sttwo\trans{\tdone} \stfour$ and $\stthree\trans{\tdone}
  \stfour$.
\end{proposition}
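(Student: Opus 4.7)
The plan is to prove one-step confluence by case analysis on the two transitions involved. The argument rests on two observations about the rules of Figure~\ref{fig:transone}.

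The first observation is that each position in $\tdone$ (a pair $(\typeone, \ctone)$ consisting of a type occurrence together with a surrounding context) is the source of at most one rule. Indeed, the rule immediately above and the rule immediately below an occurrence of $\BB$ induce transitions that use different (position, direction) pairs as sources, so no ambiguity arises. A noteworthy special case is that a token sitting at an input occurrence $\BB_i$ (for $i \leq m$) of a unitary $U$ of arity $m$ can only be moved by the corresponding unitary rule: the enclosing $(\mathsf{E}_\linmap)$ rule only has transitions \emph{into} such a position, never out of it. The second observation is that non-unitary transitions leave the quantum register $\qrone$ unchanged, and that two applications of the unitary rule for \emph{different} occurrences of $(\mathsf{a_U})$ act on disjoint qubit indices; hence their effects on the register commute as operators on $\CC^{2^n}$.

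Given these facts, let $M_2$ and $M_3$ denote the sets of token positions moved by $\stone \trans{\tdone} \sttwo$ and $\stone \trans{\tdone} \stthree$ respectively. Three cases arise. If $M_2 \cap M_3 = \emptyset$, the rules act on disjoint positions and, by the second observation, their register effects commute; applying the rule that produced $\stthree$ to $\sttwo$ (and symmetrically) yields a common successor $\stfour$. If $M_2 \cap M_3 \neq \emptyset$ but $M_2 \neq M_3$, we obtain a contradiction: any token in the intersection has a unique outgoing rule by the first observation, and this rule in turn determines the entire source set (a singleton for single-token rules, the full input list of a unitary for the unitary rule), forcing $M_2 = M_3$. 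Finally, if $M_2 = M_3$, the rule applied by both transitions coincides and produces a unique successor state, so $\sttwo = \stthree$.

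The main obstacle is the first observation. Establishing it requires a systematic inspection of every rule in Figure~\ref{fig:transone}: for each (position, direction) pair, one must check that it appears as a source in exactly one rule, and in particular that positions straddling two adjacent rule applications in $\tdone$ have their outgoing transitions owned by only one of those rules. Although conceptually straightforward, this inspection is somewhat tedious due to the number of combinations of contexts and polarities involved.
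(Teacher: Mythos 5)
Your proposal is correct and takes essentially the same route as the paper, whose own proof is a one-line appeal to inspection of the rules and the absence of critical pairs in $\trans{\tdone}$. Your two observations (each position--polarity pair is the source of at most one rule, and rules with disjoint sources commute on the register because distinct unitary steps act on disjoint qubit indices) together with the three-case analysis are exactly the detailed content behind that remark.
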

\begin{proof}
By simply inspecting the various rules. Notice that there are no critical pairs in 
$\trans{\tdone}$. 
\end{proof}
The way $\autom{\tdone}$ is built by following a type derivation $\pi$
induces the following notion:
\begin{definition}
Given a type derivation $\tdone$, the partial function \emph{computed by $\tdone$} 
is denoted as $\pfun{\tdone}$, has domain $\CC^{2^n}$ and codomain $\CC^{2^m}$
(where $n$ and $m$ are the input and output arity of $\tdone$) and is
defined by stipulating that $\pfun{\tdone}(\qrone)=\qrtwo$ iff
any initial state for $\qrone$ rewrites into a 
final state for $\qrthree$ and $\permone$, where $\qrthree=\permone^{-1}(\qrtwo)$. 
\end{definition}
Given a type derivation $\tdone$, $\pfun{\tdone}$ is either always undefined or
always defined. Indeed, the fact any initial configuration (for, say, $\qrone$)
rewrites to a final configuration or not does \emph{not} depend on $\qrone$ but only
on $\tdone$:
\begin{lemma}[Uniformity]
For every type derivation $\tdone$ and for every occurrences
$\occone_1,\ldots,\occone_n$, $\occtwo_1,\ldots,\occtwo_n$, there is a unitary
operator $\unopone$ such that
whenever $(\occone_1,\ldots,\occone_n,\qrone)\trans{\tdone}(\occtwo_1,\ldots,\occtwo_n,\qrtwo)$
it holds that $\qrtwo=\unopone(\qrone)$.
\end{lemma}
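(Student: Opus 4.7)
The plan is to prove uniformity by straightforward inspection of the transition rules in Figure~\ref{fig:transone}, showing that for each rule the effect on the quantum register is completely determined by the source (and the shape of the rule), and in particular is independent of the concrete register $\qrone$. Since a single transition $(\occone_1,\ldots,\occone_n,\qrone)\trans{\tdone}(\occtwo_1,\ldots,\occtwo_n,\qrtwo)$ is witnessed by exactly one rule of $\tdone$, fixing the source and target occurrences fixes (at most) the rule and hence the operator applied.

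First I would dispatch all the rules stemming from $(\mathsf{a_v})$, $(\mathsf{I}^1_\linmap)$, $(\mathsf{E}_\linmap)$, and $(\mathsf{I}_\otimes)$ (together with the elided rules for $(\mathsf{I}^2_\linmap)$): in each of these, the quantum register on the right-hand side coincides syntactically with the one on the left-hand side. For any such rule, the witnessing unitary is simply $\idop$.

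The only interesting case is the rule associated with $(\mathsf{a_U})$. There the transition sends $(\soccone(\BB_1,\ldots,\BB_m),\qrone)$ to $(\soccone(\BB_{m+1},\ldots,\BB_{2m}),\unopone^{i_1,\ldots,i_m}(\qrone))$, where $i_k$ is the position of $\BB_k$ inside the source sequence $\soccone$. Since the source occurrences $\occone_1,\ldots,\occone_n$ are fixed by hypothesis, the indices $i_1,\ldots,i_m$ are fixed as well; hence the unitary $\unopone^{i_1,\ldots,i_m}$ depends only on $\tdone$ and on the source occurrences, not on $\qrone$, which is exactly what the statement requires. If no rule can fire with the given pair of occurrence sequences, the hypothesis is vacuously false and any unitary (say $\idop$) works.

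The only mild subtlety, which I would address with a brief inspection, is verifying that two different rules cannot produce the same source/target occurrence pair with incompatible effects on $\qrone$. This follows from the fact, already exploited in the one-step confluence proof, that the rules of Figure~\ref{fig:transone} have no critical pairs: each rule is triggered by a distinctive shape of context appearing in the source occurrence, and the accompanying target occurrence unambiguously identifies which rule fired. So I expect no real obstacle, and the lemma reduces to a finite check over the transition rules.
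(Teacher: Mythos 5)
Your proposal is correct and follows essentially the same route as the paper's own (very terse) proof: observe that for a fixed pair of occurrence sequences at most one rule applies, and that each rule acts uniformly on the register (identity for all rules except $(\mathsf{a_U})$, whose operator $\unopone^{i_1,\ldots,i_m}$ is determined by the source occurrences alone). Your version merely spells out the case analysis and the vacuous case in more detail.
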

\begin{proof}
Observe that for every $\occone_1,\ldots,\occone_n$, $\occtwo_1,\ldots,\occtwo_n$ there
is \emph{at most} one of the rules defining $\trans{\tdone}$ which can be applied. Moreover,
notice that each rule acts uniformly on the underlying quantum register.
\end{proof}
In the following section, we will prove that $\pfun{\tdone}$ is always
a \emph{total} function, and that it makes perfect sense from a
quantum point of view.
\section{Main Properties of $\IAM{\QL}$}\label{sect:mainres}
In this section, we will give some crucial results about
$\IAM{\QL}$. More specifically, we prove that runs of this token
machine are indeed finite and end in final states.  We proceed by
putting $\QL$ in correspondence to $\MLL$, inheriting its very elegant
proof theory and token machines.
\subsection{A Correspondence Between $\MLL$ and $\QL$}\label{sec:mllql}
Let $\AAA=\{\atomone, \atomtwo,\ldots\}$ be a countable set of \emph{propositional atoms}. 
A \emph{formula} $\formone$ of Multiplicative Linear Logic (\MLL) is given by the following grammar:
$$
\formone,\formtwo::= \atomone\midd\lneg{\atomone}\midd\formone\otimes\formtwo\midd\formone\lpar\formtwo.
$$
Linear negation can be extended to all formulas in the usual way: 
$$
  \lneg{(\lneg{\alpha})}=\alpha;\qquad\qquad
  \lneg{\formone\otimes\formtwo}=\lneg{\formone}\lpar\lneg{\formtwo};\qquad\qquad
  \lneg{\formone\lpar\formtwo}=\lneg{\formone}\otimes\lneg{\formtwo}.
$$ 
This way, $\dneg{\formone}$ is just $\formone$.  The one-sided sequent
calculus for \MLL\ is very simple:
$$
  \urule{}{\vdash{\formone,\lneg{\formone}}}{\mathsf{ax}}
  \quad
  \brule{\vdash{\conone,\formone}}{\vdash\contwo,\lneg{\formone}}{\vdash{\conone,\contwo}}{\mathsf{cut}}
  \qquad
  \brule{\vdash{\conone,\formone}}{\vdash\contwo,\formtwo}{\vdash{\conone,\contwo,\formone\otimes\formtwo}}{\otimes}
  \qquad
  \urule{\vdash{\conone,\formone,\formtwo}}{\vdash{\conone,\formone\lpar\formtwo}}{\lpar}
$$ 
The logic \MLL\ enjoys cut-elimination: there is a terminating
algorithm turning any \MLL\ proof into a cut-free proof of the same
conclusion. A notion of \emph{structural equivalence} between two
\MLL\ proofs $\pmllone,\pmlltwo$ having the same conclusion
$\vdash{\conone}$ can be easily defined and holds only if $\pmllone$
and $\pmlltwo$ are essentially \emph{the same} proof modulo
renaming of the formulas occurring in $\pmllone$ and $\pmlltwo$.
Remarkably, two \MLL\ proofs which are structurally equivalent
are actually the \emph{same} proof, a result which does not hold
in more expressive logics like \MELL. More details on that can
be found in~\cite{EV}.

Any $\QL$ type derivation $\tdone$ can be put in correspondence with
\emph{some} \MLL\ proofs.  We inductively define the map
$\mapone{\cdot}$ from $\QL$ types to \MLL\ formulas as follows:
$$
  \mapone{\BB}=\alpha;\qquad\qquad
  \mapone{A\linmap B}=\lneg{{\mapone{A}}}\lpar\mapone{B};\qquad\qquad
  \mapone{A\otimes B}=\mapone{A}\otimes\mapone{B}
$$
Given a judgment $\judgone=\jd{\conone}{\termone:\typeone}$ and a
natural number $n\in\NN$, the \MLL\ sequent \emph{corresponding} to
$\judgone$ and $n$ is the following one:
$$
\vdash\underbrace{\lneg{\alpha},\ldots,\lneg{\alpha}}_{\mbox{$n$ times}},
\lneg{(\mapone{\typetwo_1})},\ldots,\lneg{(\mapone{\typetwo_m})},\mapone{\typeone},
$$ 
where $\conone= x_1:\typetwo_1,\ldots,x_m:\typetwo_m$. For every
$\tdone$, we define now a set of \MLL\ proofs $\maptwo{\tdone}$. This
way, every type derivation $\tdone$ for
$\judgone=\jd{\conone}{\termone:\typeone}$ such that $n$ bits occur in
$\termone$, is put in relation to possibly many \MLL\ proofs of the
sequent corresponding to $\judgone$ and $n$.  One among them is called
the \emph{canonical proof} for $\tdone$. The set $\maptwo{\tdone}$ and
canonical proofs are defined by induction on the structure of the
underlying type derivation $\tdone$. The type constructions of $\QL$
are mapped to the corresponding $\MLL$ logical operators, rules
$\mathsf{(a_{q0})}$ and $\mathsf{(a_{q1})}$ are mapped to axioms, and
rule $(\mathsf{a_U})$ is mapped to a proof encoding a permutation of
the involved atoms.  When the latter is the identity, we get the
{canonical proof} for $\tdone$. For more details, please refer
to~\cite{EV}.

Given an $\mll$ proof $\pmllone$, let us denote as
$\statesmll{\pmllone}$ the class of all finite sequences of atom
occurrences in $\pmllone$.  The relation $\transmll{\pmllone}$ can be
extended to a relation on $\statesmll{\pmllone}$ by stipulating that
$$
(\occone_1,\ldots,\occone_{n-1},\occtwo,\occone_{n+1},\ldots,\occone_{m})
\transmll{\pmllone}
(\occone_1,\ldots,\occone_{n-1},\occthree,\occone_{n+1},\ldots,\occone_{m})
$$ 
whenever $\occtwo\transmll{\pmllone}\occthree$. As usual,
$\transmll{\pmllone}^+$ is the transitive closure of
$\transmll{\pmllone}$.

Let us now consider a type derivation $\tdone$ in $\QL$, its quantum
token machine $\autom{\tdone}$, and any
$\pmllone\in\maptwo{\tdone}$. States of $\autom{\tdone}$ can be mapped
to $\statesmll{\pmllone}$ by simply forgetting the underlying quantum
register and mapping any occurrence of $\tdone$ to the corresponding
atom occurrence in $\pmllone$. This way one gets a map
$$
\mapthreenp{\tdone}{\pmllone}:\states{\tdone}\rightarrow \statesmll{\pmllone}
$$
such that, given a state $\stgen=(\occone_1,\ldots,\occone_n,\qrone)$ in $\states{\tdone}$, 
$|\mapthree{\tdone}{\pmllone}{\stgen}|=n$, i.e., the number of occurrences in $\stgen$ 
is the same as the length of $\mapthree{\tdone}{\pmllone}{\stgen}$.
Each reduction step on the token machine $\autom{\tdone}$ corresponds to \emph{at least one} 
reduction step in the $\mll$ machine $\mmllone{\pmllone}$, where $\pmllone\in\maptwo{\tdone}$ 
is the canonical proof:
\begin{lemma}\label{lemma:corresponding}
  Let us consider a token machine $\autom{\tdone}$ and two states
  $\stone,\sttwo\in\states{\tdone}$. If $\stone\trans{\tdone}\sttwo$
  and $\pmllone\in\maptwo{\tdone}$ is canonical, then
  $\mapthree{\tdone}{\pmllone}{\stone}\transmll{\pmllone}^+\mapthree{\tdone}{\pmllone}{\sttwo}$.
\end{lemma}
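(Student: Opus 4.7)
The plan is to reason by case analysis on which rule of Figure~\ref{fig:transone} justifies the step $\stone\trans{\tdone}\sttwo$, using the inductive structure of the canonical proof $\pmllone\in\maptwo{\tdone}$ together with the fact that $\mapthreenp{\tdone}{\pmllone}$ forgets the quantum register and sends each $\BB$-occurrence of $\tdone$ to a specific atom occurrence in $\pmllone$. The translation $\mapone{\cdot}$ interprets each $\QL$ type constructor by the corresponding $\mll$ connective, so at the level of canonical proofs each typing rule of $\QL$ is realised by a small fragment of MLL derivation: the atomic rules $(\mathsf{a_v})$, $(\mathsf{a_{q0}})$, $(\mathsf{a_{q1}})$ become $\mathsf{ax}$ links; the introduction rules $(\mathsf{I}^1_\linmap)$, $(\mathsf{I}_\linmap^2)$ and $(\mathsf{I}_\otimes)$ become $\lpar$ and $\otimes$ rules; and the application rule $(\mathsf{E}_\linmap)$ corresponds to a $\otimes$ followed by a $\mathsf{cut}$ against $\lneg{\mapone{\typeone}}\lpar\mapone{\typetwo}$.

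I would then treat the one-token rules first, namely those generated by $(\mathsf{a_v})$, $(\mathsf{I}^1_\linmap)$, $(\mathsf{I}_\linmap^2)$, $(\mathsf{E}_\linmap)$ and $(\mathsf{I}_\otimes)$. Each such transition slides a single token between two adjacent $\BB$-occurrences of $\tdone$, reflecting a move of the corresponding atom occurrence in $\pmllone$ either through an axiom link (for $(\mathsf{a_v})$), through the switch of a $\lpar$ or $\otimes$ node (for the introduction rules), or across the cut attached to an application. In every subcase this is precisely one transition of $\transmll{\pmllone}$; since $\transmll{\pmllone}$ lifts componentwise to $\statesmll{\pmllone}$ by its very definition, the other tokens in the state are unaffected, and we obtain a single step of $\transmll{\pmllone}$, which in particular lies in $\transmll{\pmllone}^+$.

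The main obstacle, and the reason the transitive closure is needed in the statement, is the unitary rule $(\mathsf{a_U})$. In the $\QL$ machine this is a \emph{synchronous} transition, moving $m$ tokens at once from the input occurrences $\BB_1,\ldots,\BB_m$ to the output occurrences $\BB_{m+1},\ldots,\BB_{2m}$ and multiplying the register by $\unopone^{i_1,\ldots,i_m}$. A single $\mll$ step cannot mimic this. Here the hypothesis that $\pmllone$ is canonical is essential: by construction, $(\mathsf{a_U})$ is translated into the $\mll$ proof encoding the \emph{identity} permutation, so the $m$ input atoms are linked pairwise to the $m$ output atoms by independent axioms, glued by suitable $\otimes$ and $\lpar$ rules. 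Each output occurrence in $\sttwo$ is thus the opposite endpoint of the axiom attached to the corresponding input occurrence in $\stone$, reachable from it by one $\transmll{\pmllone}$ step; firing these $m$ steps one at a time---again relying on the componentwise lift of $\transmll{\pmllone}$---produces a sequence of length $m$, whence $\mapthree{\tdone}{\pmllone}{\stone}\transmll{\pmllone}^+\mapthree{\tdone}{\pmllone}{\sttwo}$. For a non-canonical $\pmllone$ the axioms would implement a non-trivial permutation and the input/output endpoints would not match position-by-position, so canonicity cannot be dropped. Finally, since $\mapthreenp{\tdone}{\pmllone}$ discards the register, the update from $\qrone$ to $\unopone^{i_1,\ldots,i_m}(\qrone)$ is immaterial for the statement, which concludes the case analysis.
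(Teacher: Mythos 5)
Your proposal is correct and follows essentially the same route as the paper, whose proof is just the one-line remark that the result follows by induction on the structure of $\tdone$: your case analysis on the rules of Figure~\ref{fig:transone} is precisely how that induction unfolds, and you correctly isolate the two key points, namely that single-token moves map to single $\transmll{\pmllone}$ steps lifted componentwise, while the synchronous $(\mathsf{a_U})$ transition decomposes into $m$ successive steps through the identity-permutation axioms of the canonical proof (which is exactly where canonicity and the transitive closure are needed). No gaps; your write-up is in fact more detailed than the paper's.
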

\begin{proof}
This goes by induction on the structure of $\tdone$.
\end{proof}
Any (possible) pathological situation on the quantum token machine,
then, can be brought back to a corresponding (absurd) pathological
situation in the $\MLL$ token machine. This is the principle that will
guide us in the rest of this section.
\subsection{Termination, Progress and Soundness}
The first property we want to be sure about is that every computation
of any token machine $\autom{\tdone}$ always terminates. The second
one is progress (i.e. deadlock-freedom). In both cases, we use in an
essential way the correspondence between $\QL$ and $\MLL$.
\begin{proposition}[Termination]~\label{prop:term} 
  For any quantum token machine $\autom{\tdone}$, any sequence
  $\stone\trans{\tdone}\sttwo\trans{\tdone}\ldots$ is finite.
\end{proposition}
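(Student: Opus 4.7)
The plan is to reduce termination of $\autom{\tdone}$ to termination of the classical $\MLL$ token machine, exploiting the bridge established in Section~\ref{sec:mllql}. The termination of the $\MLL$ token machine on any $\MLL$ proof $\pmllone$ is a well-known consequence of cut-elimination in $\MLL$: the Geometry of Interaction interpretation of $\MLL$ proofs is well-defined precisely because its token traversal is always finite. I would first isolate this as a preliminary fact (either cited or proved by induction on the size of $\pmllone$): for every $\pmllone$ and every sequence $\tdgen_1\transmll{\pmllone}\tdgen_2\transmll{\pmllone}\cdots$ of states in $\statesmll{\pmllone}$, the sequence is finite.

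Next, suppose by contradiction that $\autom{\tdone}$ admits an infinite run $\stone_1\trans{\tdone}\stone_2\trans{\tdone}\cdots$. Fix the canonical $\MLL$ proof $\pmllone\in\maptwo{\tdone}$ associated with $\tdone$. By Lemma~\ref{lemma:corresponding}, each transition $\stone_i\trans{\tdone}\stone_{i+1}$ lifts to at least one $\MLL$ transition $\mapthree{\tdone}{\pmllone}{\stone_i}\transmll{\pmllone}^{+}\mapthree{\tdone}{\pmllone}{\stone_{i+1}}$. Concatenating these, one obtains an infinite sequence
$$
\mapthree{\tdone}{\pmllone}{\stone_1}\transmll{\pmllone}\cdots\transmll{\pmllone}\mapthree{\tdone}{\pmllone}{\stone_2}\transmll{\pmllone}\cdots
$$
of states in $\statesmll{\pmllone}$, contradicting termination of the $\MLL$ token machine on $\pmllone$.

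The main obstacle, in my view, is ensuring that Lemma~\ref{lemma:corresponding} really suffices. Concretely, one must guarantee that the map $\mapthreenp{\tdone}{\pmllone}$ is well-defined on every state reachable from an initial one (so that the infinite $\trans{\tdone}$-sequence can actually be projected), and that each quantum transition produces at least one honest $\transmll{\pmllone}$ step rather than being collapsed to identity by the projection. Both points follow from the definition of $\mapthreenp{\tdone}{\pmllone}$ as a forgetful map that discards only the quantum register while preserving occurrences, together with the fact that each rule in Figure~\ref{fig:transone} shifts at least one token to a distinct occurrence; however, the unitary-gate rule $(\mathsf{a_U})$ deserves explicit checking, since it is the one case where the quantum register changes nontrivially and where multiple tokens move at once. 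Once this is verified, the contradiction goes through and termination of $\autom{\tdone}$ follows.
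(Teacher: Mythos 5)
Your proof is correct and follows essentially the same route as the paper's: an infinite run of $\autom{\tdone}$ is projected via Lemma~\ref{lemma:corresponding} onto an infinite path in the token machine of the canonical $\MLL$ proof, contradicting the well-known finiteness of $\MLL$ paths. The extra care you take about $\mapthreenp{\tdone}{\pmllone}$ not collapsing steps is already guaranteed by the lemma's use of $\transmll{\pmllone}^{+}$, so nothing further is needed.
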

\begin{proof}
  Suppose, for the sake of contradiction, than there exists an
  infinite computation in $\autom{\tdone}$. This implies by
  Lemma~\ref{lemma:corresponding} that there exists an infinite path
  in the token machine $\mmllone{\pmllone}$ where $\pmllone$ is the
  canonical \MLL\ proof for $\tdone$. This is a contradiction,
  because paths in \MLL\ proofs are well-known to be always
  finite.
\end{proof}

Progress (i.e. deadlock-freedom) is more difficult to prove than
termination. Given a type derivation $\tdone$, an \emph{argument
  occurrence} is any negative occurrence $(\typeone,\nconone)$ of
$\BB$ in a $(\mathsf{a_U})$ axiom. We extend this definition to the
corresponding atom occurrence when $\pmllone\in\maptwo{\tdone}$.  A
\emph{result occurrence} is defined similarly, but the occurrence has
to be positive.
\begin{proposition}[Progress]\label{prop:progress}
  Suppose $\tdone$ is a type derivation in $\QL$ and
  $\stone\in\states{\tdone}$ is initial. Moreover, suppose that
  $\stone\trans{\tdone}^*\sttwo$. Then either $\sttwo$ is final or
  $\sttwo\trans{\tdone}\stthree$ for some
  $\stthree\in\states{\tdone}$.
\end{proposition}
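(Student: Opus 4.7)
The plan is a proof by contradiction that leverages the $\MLL$ correspondence of Section~\ref{sec:mllql}. Suppose $\sttwo$ is reachable from an initial state $\stone$, is not final, and yet no transition $\sttwo\trans{\tdone}\stthree$ is enabled.

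Inspection of Figure~\ref{fig:transone} shows that every transition rule other than the one associated with $(\mathsf{a_U})$ is single-token: it fires as soon as some token occupies a position matching its left-hand side. Consequently, any token in $\sttwo$ that is not already at a final position (i.e., at some occurrence in $\poccs{\typeone}$) must sit at an argument occurrence of some $(\mathsf{a_U})$ axiom of arity $m$; I shall call such tokens \emph{waiting}. Since $\sttwo$ is stuck, the $(\mathsf{a_U})$ transition itself also cannot fire for any of these axioms, so for every waiting token at least one of the other $m-1$ sibling argument occurrences of its axiom is unoccupied, yielding a \emph{hole}.

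A straightforward induction on the length of the run from $\stone$ to $\sttwo$ shows that the number of tokens is invariant (each rule preserves it), so in $\sttwo$ the token count matches that of any final state. Thus every hole must be destined to be filled by some specific token currently elsewhere in $\sttwo$. By the previous paragraph that token cannot be at a position where a single-token rule is enabled, so it too is waiting. This induces a relation from waiting tokens to waiting tokens in which every waiting token has positive out-degree, and on the finite nonempty set of waiting tokens such a relation must contain a cycle $t_1\to t_2\to\cdots\to t_k\to t_1$.

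The main obstacle is turning this cycle into a contradiction, and this is precisely the point at which the $\MLL$ correspondence earns its keep. The plan is to fix a canonical $\pmllone\in\maptwo{\tdone}$ and pass through $\mapthreenp{\tdone}{\pmllone}$: each waiting $t_i$ maps to an atom occurrence on the $\MLL$ rendering of an $(\mathsf{a_U})$, and each dependency $t_i\to t_{i+1}$ corresponds to a trip segment in $\pmllone$ joining the atom of $t_{i+1}$ to a hole-atom of the axiom containing $t_i$. Composing the segments around the cycle produces a closed loop of trip arcs in $\pmllone$, which contradicts the standard fact — the same one already used in the proof of Proposition~\ref{prop:term} — that paths induced by an $\MLL$ proof are simple, finite and acyclic. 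Combined with termination, this establishes that every initial state of $\autom{\tdone}$ rewrites to a final one, and hence that the partial function $\pfun{\tdone}$ is in fact total.
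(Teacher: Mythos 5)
Your overall strategy coincides with the paper's: argue by contradiction, observe that in a stuck non-final state every non-final token must sit at an argument occurrence of some $(\mathsf{a_U})$ axiom, extract a cyclic dependency among the blocked synchronisation points, and refute the cycle through the $\MLL$ correspondence. There are, however, two places where your argument has a genuine gap. The first is the sentence ``every hole must be destined to be filled by some specific token currently elsewhere in $\sttwo$'': token-count invariance does not identify a filler for a given hole, nor by itself guarantee that one exists. The paper obtains this by extending the image of the run under $\mapthreenp{\tdone}{\pmllone}$ to a \emph{maximal} run of the $\MLL$ machine (which never blocks, since $\MLL$ has no synchronisation), in which every atom occurrence of $\pmllone$ is visited exactly once and the argument/result atoms of the blocked gates cannot have been visited in the prefix; the unfilled argument atom is then visited in the continuation, and the immediately preceding result occurrence tells you which blocked gate it depends on. You need something of this sort even to define your relation on waiting tokens.

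The second, more serious, gap is the claim that composing the segments around the cycle ``produces a closed loop of trip arcs in $\pmllone$'' for the \emph{canonical} $\pmllone$. Each of your segments ends at an unoccupied argument atom of some $(\mathsf{a_U})$ axiom, while the next segment starts at a \emph{different} argument atom of that same axiom; to concatenate them you must cross the axiom from an argument to a result, and the canonical proof renders $(\mathsf{a_U})$ with the identity permutation, which in general does \emph{not} link the atom where one segment arrives to the atom where the next one departs. So the loop need not close in $\pmllone$. This is precisely why the paper defines $\maptwo{\tdone}$ as a \emph{set} of $\MLL$ proofs, one per choice of permutation at each $(\mathsf{a_U})$: the contradiction is reached by selecting, for each blocked gate on the cycle, the ``good'' permutation linking its incoming and outgoing edges, which yields a cyclic path in $\mmllone{\pmlltwo}$ for some non-canonical $\pmlltwo\in\maptwo{\tdone}$. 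Without this switch of proof your final appeal to acyclicity of $\MLL$ paths does not go through.
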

\begin{proof}
Let us consider a computation
$\stgen_1\trans{\tdone}\ldots\trans{\tdone}\stgen_k$ on a quantum
token machine $\autom{\tdone}$. Suppose that the state $\stgen_k$ is a
deadlocked state, i.e. $\stgen_k$ is not a final state, and that there
exists no $\stgen_{m}$ such that $\stgen_k\trans{\tdone}\stgen_m$.
The fact $\stgen_k$ is a deadlocked state means that $l\geq 1$
occurrences in $\stgen_k$ are argument occurrences, since the latter
are the only points of synchronisation of the machine.  Let us
consider any \emph{maximal} sequence
\begin{equation}\label{equ:maxseq}
  \mapthree{\tdone}{\pmllone}{\stgen_1}\transmll{\pmllone}\ldots\transmll{\pmllone}\mapthree{\tdone}{\pmllone}{\stgen_k}
    \transmll{\pmllone}\mathsf{Q_1}\transmll{\pmllone}\ldots\transmll{\pmllone}\mathsf{Q}_n,
\end{equation}
where $\pmllone\in\maptwo{\tdone}$ is the canonical proof
corresponding to $\tdone$.  Observe that in (\ref{equ:maxseq}), all
occurrences of atoms in $\pmllone$ are visited exactly once, including
those corresponding to argument and result occurrences from
$\tdone$. Notice, however, that the argument and result occurrences of
the unitary operators affected by $\stgen_k$ cannot have been visited
along the subsequence
$\mapthree{\tdone}{\pmllone}{\stgen_1}\transmll{\pmllone}\ldots\transmll{\pmllone}\mapthree{\tdone}{\pmllone}{\stgen_k}$
(otherwise we would visit the occurrences in $\stgen_k$ at least
twice, which is not possible).  Now, form a directed graph whose nodes
are the unitary constants $U_1,\ldots,U_h$ which block $\stgen_k$,
plus a node $F$ (representing the conclusion of $\tdone$), and whose
edges are defined as follows:
\begin{varitemize}
\item
  there is an edge from $U_i$ to $U_j$ iff along
  $\mathsf{Q_1}\transmll{\pmllone}\ldots\transmll{\pmllone}\mathsf{Q}_n$
  one of the $l$ independent computations corresponding to a blocked
  occurrence in $\stgen_{k}$ is such that a result occurrence of $U_i$
  is followed by an argument occurrence of $U_j$ and the occurrences
  between them are neither argument nor result occurrences.
\item
  there is an edge from $U_i$ to $F$ iff along
  $\mathsf{Q_1}\transmll{\pmllone}\ldots\transmll{\pmllone}\mathsf{Q}_n$
  one of the $l$ traces is such that a result occurrence of $U_i$ is
  followed by a final occurrence of an atom and the occurrences
  between them are neither argument nor result occurrences.
\end{varitemize}
The thus obtained graph has the following properties:
\begin{varitemize}
\item
  Every node $U_i$ has at least one incoming edge, because otherwise
  the configuration $\stgen_k$ would not be deadlocked.
\item
  As a consequence, the graph must be cyclic, because otherwise we
  could topologically sort it and get a node with no incoming edges
  (meaning that some of the $U_i$ would not be blocked!). Moreover,
  the cycle does not include $F$, because the latter only has incoming
  nodes.
\end{varitemize}
From any cycle involving the $U_j$, one can induce the presence of a
cycle in the token machine $\mmllone{\pmlltwo}$ for some
$\pmlltwo\in\maptwo{\tdone}$. Indeed, such a $\pmlltwo$ can be formed
by simply choosing, for each $U_j$, the ``good'' permutation, namely
the one linking the incoming edge and the outgoing edge which are part
of the cycle. This way, we have reached the absurd starting from the
existence of a deadlocked computation.
\end{proof}

The immediate consequence of the termination and progress results is
that $\pfun{\tdone}$ is always a \emph{total} function. The way
$\autom{\tdone}$ is defined ensures that $\pfun{\tdone}$ is obtained
by feeding some of the inputs of a unitary operator $\unopone$ with
some bits (namely those occurring in $\tdone$). $\unopone$ is itself
obtained by composing the unitary operators occurring in $\tdone$,
which can thus be seen as a program computing a quantum circuit. In a
way, then, token machines both show that $\QL$ \emph{is a truly quantum
calculus} and can be seen as the right operational semantics for it.

The last step consists in understanding the relation between token
machines and the equational theory on superposed terms introduced in
Section~\ref{sect:equtheo}. First of all, observe that
$\sutone=\sum_{i=1}^n\kappa_i\termone_i$ has type $\typeone$ in the
context $\conone$, then $\termone_1,\ldots,\termone_n$ all have type
$\typeone$ in the context $\conone$. But there is more to that: for
every $1\leq i\leq n$, there is \emph{exactly one} type derivation
$\tdone_i\pof\jd{\conone}{\termone_i:\typeone}$. This holds because
two such type derivations $\tdone_i$ and $\tdtwo_i$ are such that the
canonical proofs in $\maptwo{\tdone_i}$ and $\maptwo{\tdtwo_i}$ are
structurally equivalent, thus identical. It is then possible to extend
the definition of $\pfun{\cdot}$ to superposed terms: if
$\sutone=\sum_{i=1}^n\kappa_i\termone_i$ has type $\typeone$ in
$\conone$, then $\pfun{\sutone}$, when fed with a vector $\vecone$,
returns $\sum_{i=1}^n\kappa_i\pfun{\tdone_i}(\vecone)$, where
$\tdone_i$ is the \emph{unique} derivation giving type $\typeone$ to
$\termone_i$ in the context $\conone$. Remarkably, token machines
behave in accordance to the equational theory: this is our Soundness
Theorem.
\begin{theorem}[Soundness]
Given $\sutone$ and $\suttwo$ superposed terms, if
$\jd{\conone}{\sutone\eqterm\suttwo:\typeone}$, then
$\pfun{\sutone}=\pfun{\suttwo}$.
\end{theorem}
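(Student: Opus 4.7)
The plan is to proceed by induction on the derivation of $\jd{\conone}{\sutone\eqterm\suttwo:\typeone}$. The three equivalence-closure rules $\mathsf{refl}$, $\mathsf{sym}$ and $\mathsf{trans}$ are immediate because $\pfun{\cdot}$ is a function, and the rule $\mathsf{sum}$ follows directly from the linearity built into the definition of $\pfun{\cdot}$ on superposed terms: the $\sutthree$ summand contributes identically on both sides, while the inductive hypothesis makes $\alpha\pfun{\sutone}$ and $\alpha\pfun{\suttwo}$ agree.

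For each of the context-closure rules I would first establish a compositionality property of $\autom{\tdone}$: the transitions at $(\mathsf{a_v})$, $(\mathsf{I}_\linmap^1)$, $(\mathsf{E}_\linmap)$ and $(\mathsf{I}_\otimes)$ in Figure~\ref{fig:transone} merely route tokens between sibling occurrences and do not act on the quantum register, so any run of the composite machine factors through independent runs of the sub-machines attached to the immediate sub-derivations. Replacing a sub-derivation by one computing the same $\pfun{\cdot}$, as the inductive hypothesis guarantees, therefore leaves the overall input-output function unchanged. This verification is purely structural and must be carried out once per context rule.

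The axiom $\mathsf{quant}$ is handled directly by the $(\mathsf{a_U})$ transition rule, which applies $\unopone^{i_1,\ldots,i_m}$ to the quantum register the moment all $m$ tokens meet at the synchronisation point. When the input is the basis vector $\ket{\bitone_1\ldots\bitone_k}$ itself, the action of $\autom{\tdone}$ produces exactly the coefficient vector read off from $\unopone\ket{\bitone_1\ldots\bitone_k}$, which by definition is the superposed term on the right-hand side of the axiom.

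The real obstacle is the $\mathsf{beta}$ and $\mathsf{beta.pair}$ axioms. Given $\tdone\pof\jd{\conone}{(\abstr{\varone}{\termone})\termtwo:\typeone}$, Lemma~\ref{lemma:substlemma} supplies a canonical derivation $\tdtwo\pof\jd{\conone}{\subst{\termone}{\varone}{\termtwo}:\typeone}$, and the goal is $\pfun{\tdone}=\pfun{\tdtwo}$. I plan to argue this through the correspondence with $\mll$ of Section~\ref{sec:mllql}: the canonical proof $\pmllone\in\maptwo{\tdone}$ contains a cut between the $\parr$ coming from $(\mathsf{I}_\linmap^1)$ and the $\otimes$ coming from $(\mathsf{E}_\linmap)$, and eliminating this cut yields a proof structurally equivalent to the canonical $\pmlltwo\in\maptwo{\tdtwo}$; by the paper's remark on $\mll$, structural equivalence there is actual identity. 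Since $\mll$ token machines are invariant under cut-elimination---a classical geometry-of-interaction result---the initial-to-final bijections on atom occurrences induced by $\mmllone{\pmllone}$ and $\mmllone{\pmlltwo}$ coincide. Transporting this fact back through Lemma~\ref{lemma:corresponding} in both directions and invoking the Uniformity Lemma, every initial-to-final run of $\autom{\tdone}$ and the corresponding run of $\autom{\tdtwo}$ visit the $(\mathsf{a_U})$-occurrences in the same order and with the same argument-to-result wiring. The composed unitary acting on the quantum register is therefore the same on both sides, giving $\pfun{\tdone}(\qrone)=\pfun{\tdtwo}(\qrone)$ for every $\qrone$, which closes the induction.
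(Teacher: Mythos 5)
Your overall route is genuinely different from the paper's. The paper does \emph{not} induct directly on the derivation of $\jd{\conone}{\sutone\eqterm\suttwo:\typeone}$: it first proves that every derivable equation has a derivation in \emph{normal form}, where the axioms $\axrls=\{\mathsf{beta},\mathsf{beta.pair},\mathsf{quant}\}$ are applied first, then the context-closure rules $\ccrls$, then $\mathsf{sum}$, $\mathsf{sym}$, $\mathsf{trans}$. The whole point of that manoeuvre is that the rules in $\ccrls$ then only ever act on equations whose two sides are a concrete redex/reduct pair produced by $\axrls$, so that one never needs a \emph{general} congruence property of $\pfun{\cdot}$. Your plan needs exactly that general congruence, and this is where the gap lies. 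You claim that, since the transitions generated by $(\mathsf{a_v})$, $(\mathsf{I}_\linmap^1)$, $(\mathsf{E}_\linmap)$ and $(\mathsf{I}_\otimes)$ only route tokens, ``any run of the composite machine factors through independent runs of the sub-machines''. That is not literally true. $\pfun{\tdone}$ is defined only on \emph{initial} states, where all tokens sit simultaneously on $\noccs{\typeone}\cdot\bitocc{\tdone}$; but when a sub-derivation of higher-order type is placed in a context (e.g.\ rule $\mathsf{l.a}$ with a non-ground argument type, or $\mathsf{r.a}$), tokens cross its boundary interleaved in both directions: the sub-machine must emit some tokens at occurrences in $\poccs{\cdot}$ before it has received all of its tokens at $\noccs{\cdot}$. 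Its contribution to the composite run is therefore a family of path segments, not a run from an initial to a final state, and the inductive hypothesis $\pfun{\sutone}=\pfun{\suttwo}$ --- an equality of initial-to-final input/output functions --- does not obviously transfer. To repair this you would have to strengthen the induction hypothesis to the full interaction behaviour of the sub-machines (essentially their GoI interpretation as operators on all boundary occurrences), which is a substantial lemma, not a ``purely structural'' check. One-step confluence only lets you reorder token moves; it cannot remove the causal dependency of outputs on inputs that genuine higher-order feedback creates.

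Two smaller points. First, your treatment of $\mathsf{beta}$ via cut-elimination in the canonical $\mll$ proof is a reasonable idea and consonant with how the paper uses the $\mll$ correspondence for Termination and Progress, but note that Lemma~\ref{lemma:corresponding} is stated in one direction only (a $\trans{\tdone}$ step yields $\transmll{\pmllone}^+$ steps), and the $\mll$ machine forgets the quantum register entirely; so even granting identity of the atom-level paths before and after cut-elimination, you still owe an argument that the unitaries encountered along the corresponding quantum runs compose to the same operator (here commutation of gates acting on disjoint qubits, plus linearity guaranteeing no duplication of $(\mathsf{a_U})$ occurrences under substitution, must be invoked explicitly). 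Second, your handling of $\mathsf{refl}$, $\mathsf{sym}$, $\mathsf{trans}$, $\mathsf{sum}$ and $\mathsf{quant}$ is fine and matches what the paper's strategy also yields. In short: the skeleton is plausible, but the congruence step you dismiss as routine is precisely the difficulty the paper's normal-form argument is engineered to avoid, and as written your proposal does not close it.
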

\begin{proof}
We only give a sketch of the proof. More details can been found
in~\cite{EV}. The first step consists in proving that any derivation
of $\jd{\conone}{\sutone\eqterm\suttwo:\typeone}$ can be put in
\emph{normal form}, a concept defined by giving an order on the rules
in Figure~\ref{fig:eqth}. More specifically, define the following two
sets of rules:
\begin{align*}
\axrls&=\{\mathsf{beta},\mathsf{beta.pair},\mathsf{quant}\};\\
\ccrls&=\{\mathsf{l.a},\mathsf{r.a},\mathsf{in.}\lambda,\mathsf{in.}\lambda\mathsf{.pair},\mathsf{l.in.tens},\mathsf{r.in.tens}\}.
\end{align*}
A derivation of $\jd{\conone}{\sutone\eqterm\suttwo:\typeone}$ is said
to be \emph{in normal form} (and we write
$\jd{\conone}{\sutone\eqtermnf\suttwo:\typeone}$) iff
\begin{varitemize}
\item
  either the derivation is obtained by applying rule $\mathsf{refl}$;
\item
  or any branch in the derivation consists in instances of rules from
  $\axrls$, possibly followed by instances of rules in $\ccrls$,
  possibly followed by instances of $\mathsf{sum}$, possibly followed
  by instances of $\mathsf{sym}$ possibly followed by instances of
  $\mathsf{trans}$.
\end{varitemize}
In other words, a derivation of
$\jd{\conone}{\sutone\eqterm\suttwo:\typeone}$ is in normal form iff
rules are applied in a certain order. As an example, we cannot apply
transitivity or symmetry closure rules too early, i.e., before context
closure rules. One may wonder whether this restricts the class of
provable equivalences. Infact it does not:
$\conone{\conone}{\sutone\eqterm\suttwo:\typeone}$ iff 
$\conone{\conone}{\sutone\eqtermnf\suttwo:\typeone}$, a result
which is not particularly deep although a bit tedious to prove~\cite{EV}.
Once we have this result in our hands, however, proving Soundness
becomes much easier, since the difficult and problematic rules,
namely those in $\ccrls$, are applied to superposed
terms of a very specific shape, namely those obtained through
$\axrls$.
\end{proof}
\section{Related Work}\label{sec:relw}
In~\cite{HH11}, a geometry of interaction model for Selinger and
Valiron's quantum $\lambda$-calculus~\cite{SV06} is defined.  The
model is formulated in particle-style.  In~\cite{DLF11} \QMLL, an
extension of \MLL\ with a new kind of modality, is studied. \QMLL\ is
sound and complete with respect to quantum circuits, and an
interactive (particle-style) abstract machine is defined.  In both
cases, adopting a particle-style approach has a bad consequence: the
``quantum'' tensor product does \emph{not} coincide with the tensor
product in the sense of linear logic. Here we show that adopting the
wave-style approach solves the problem. Quantum extensions of
game semantics are partially connected to this work. See, for
example~\cite{DP08,Del11}. Purely linear quantum lambda-calculi
(\emph{with} measurements) can be given a fully abstract denotational
semantics, like the one proposed by Selinger and
Valiron~\cite{SelingerV08}. In their work, closure (necessary to
interpret higher-order functions) is not obtained via traces and is
not directly related in any way to the geometry of
interaction. Moreover, morphisms are just linear maps, and so the
model is far from being a quantum operational semantics. A language of
terms similar to $\QL$ has been also studied in~\cite{Yoshi14}, where
the calculus of proof-nets $\mathsf{MLL}_{\mathsf{qm}}$ is
introduced. $\mathsf{MLL}_{\mathsf{qm}}$'s syntax also includes a
measurement box-like operator (which models the possibility of
``observe'' the value of a quantum bit~\cite{NieCh00}). A multi-token
machine semantics for $\mathsf{MLL}_{\mathsf{qm}}$ proof-nets is
defined and proved to be \emph{sound}, i.e. invariant along reduction
of proof nets. Moreover, although a $\lambda$-calculus is given,
together with a compilation scheme to $\mathsf{MLL}_{\mathsf{qm}}$
proof-nets, the considered $\lambda$-calculus is one with
\emph{explicit} qubits, contrary to $\QL$. Finally, Arrighi
and Dowek's work shows that turning a sum-based algebraic $\lambda$-calculus
into a quantum computational model can be highly non-trivial~\cite{ArrDow}.
\section{Conclusions}\label{sect:conclusions}
We have introduced $\IAM{\QL}$, an interactive abstract machine which
provides a sound operational characterisation of any type derivation
in a linear quantum $\lambda$-calculus $\QL$. This is an example of a
concrete wave-style token machine whose runs cannot be seen simply as
the asynchronous parallel composition of particle-style
runs. Interestingly, synchronisation is intimately related to
entanglement: if, for example, only unary operators occur in a term
(i.e. entanglement is \emph{not} possible), synchronisation is not
needed and everything collapses to the particle-style.  Our
investigation is open to some possible future directions. A natural
step will be to extend the syntax of terms and types with an
exponential modality. The generalisation of the token machine to this
more expressive language would be an interesting and technically
challenging subject. Moreover, giving a formal status to the
connection between wave-style and the presence of entanglement is a
fascinating subject which we definitely aim to investigate
further. Finally, an interesting proof-theoretical investigation would
consist in analysing the possible connections the with the deep
inference-oriented graph formalism developed in~\cite{deep}.

\bibliographystyle{eptcs}
{\scriptsize
\nocite{VVZ14}
\bibliography{biblio}}

\begin{thebibliography}{10}
\providecommand{\bibitemdeclare}[2]{}
\providecommand{\surnamestart}{}
\providecommand{\surnameend}{}
\providecommand{\urlprefix}{Available at }
\providecommand{\url}[1]{\texttt{#1}}
\providecommand{\href}[2]{\texttt{#2}}
\providecommand{\urlalt}[2]{\href{#1}{#2}}
\providecommand{\doi}[1]{doi:\urlalt{http://dx.doi.org/#1}{#1}}
\providecommand{\bibinfo}[2]{#2}

\bibitemdeclare{inproceedings}{ArrDow}
\bibitem{ArrDow}
\bibinfo{author}{Pablo \surnamestart Arrighi\surnameend} \&
  \bibinfo{author}{Gilles \surnamestart Dowek\surnameend}
  (\bibinfo{year}{2008}): \emph{\bibinfo{title}{Linear-algebraic
  lambda-calculus: higher-order, encodings, and confluence.}}
\newblock In: {\sl \bibinfo{booktitle}{RTA}}, pp. \bibinfo{pages}{17--31},
  \doi{10.1007/978-3-540-70590-1\_2}.

\bibitemdeclare{article}{BerVa97}
\bibitem{BerVa97}
\bibinfo{author}{E.~\surnamestart Bernstein\surnameend} \&
  \bibinfo{author}{U.~\surnamestart Vazirani\surnameend}
  (\bibinfo{year}{1997}): \emph{\bibinfo{title}{Quantum Complexity Theory}}.
\newblock {\sl \bibinfo{journal}{SIAM J. Comput.}}
  \bibinfo{volume}{26}(\bibinfo{number}{5}), pp. \bibinfo{pages}{1411--1473},
  \doi{10.1137/S0097539796300921}.

\bibitemdeclare{inproceedings}{deep}
\bibitem{deep}
\bibinfo{author}{R.~\surnamestart Blute\surnameend},
  \bibinfo{author}{A.~\surnamestart Guglielmi\surnameend},
  \bibinfo{author}{I.~\surnamestart Ivanov\surnameend},
  \bibinfo{author}{P.~\surnamestart Panangaden\surnameend} \&
  \bibinfo{author}{L.~\surnamestart Stra{\ss}burger\surnameend}
  (\bibinfo{year}{2014}): \emph{\bibinfo{title}{A Logical Basis for Quantum
  Evolution and Entanglement}}.
\newblock In: {\sl \bibinfo{booktitle}{Categories and Types in Logic, Language,
  and Physics}}, {\sl \bibinfo{series}{LNCS}} \bibinfo{volume}{8222}, pp.
  \bibinfo{pages}{90--107}, \doi{10.1007/978-3-642-54789-8\_6}.

\bibitemdeclare{inproceedings}{DLF11}
\bibitem{DLF11}
\bibinfo{author}{U.~\surnamestart Dal~Lago\surnameend} \&
  \bibinfo{author}{C.~\surnamestart Faggian\surnameend} (\bibinfo{year}{2011}):
  \emph{\bibinfo{title}{On Multiplicative Linear Logic, Modality and Quantum
  Circuits}}.
\newblock In: {\sl \bibinfo{booktitle}{QPL}}, {\sl \bibinfo{series}{Electron.
  Proc. Theor. Comput. Sci.}}~\bibinfo{volume}{95}, pp.
  \bibinfo{pages}{55--66}, \doi{10.4204/EPTCS.95.6}.

\bibitemdeclare{article}{mscs2009}
\bibitem{mscs2009}
\bibinfo{author}{U.~\surnamestart Dal~Lago\surnameend},
  \bibinfo{author}{A.~\surnamestart Masini\surnameend} \&
  \bibinfo{author}{M.~\surnamestart Zorzi\surnameend} (\bibinfo{year}{2009}):
  \emph{\bibinfo{title}{On a Measurement-Free Quantum Lambda Calculus with
  Classical Control}}.
\newblock {\sl \bibinfo{journal}{Math. Structures Comput. Sci.}}
  \bibinfo{volume}{19}(\bibinfo{number}{2}), pp. \bibinfo{pages}{297--335},
  \doi{10.1017/S096012950800741X}.

\bibitemdeclare{article}{tcs2010}
\bibitem{tcs2010}
\bibinfo{author}{U.~\surnamestart Dal~Lago\surnameend},
  \bibinfo{author}{A.~\surnamestart Masini\surnameend} \&
  \bibinfo{author}{M.~\surnamestart Zorzi\surnameend} (\bibinfo{year}{2010}):
  \emph{\bibinfo{title}{Quantum Implicit Computational Complexity}}.
\newblock {\sl \bibinfo{journal}{Theoret. Comput. Sci.}}
  \bibinfo{volume}{411}(\bibinfo{number}{2}), pp. \bibinfo{pages}{377--409},
  \doi{10.1016/j.tcs.2009.07.045}.

\bibitemdeclare{article}{entcs11}
\bibitem{entcs11}
\bibinfo{author}{U.~\surnamestart Dal~Lago\surnameend},
  \bibinfo{author}{A.~\surnamestart Masini\surnameend} \&
  \bibinfo{author}{M.~\surnamestart Zorzi\surnameend} (\bibinfo{year}{2011}):
  \emph{\bibinfo{title}{Confluence Results for A Quantum Lambda Calculus with
  Measurements}}.
\newblock {\sl \bibinfo{journal}{Electron. Notes Theor. Comput. Sci.}}
  \bibinfo{volume}{270}(\bibinfo{number}{2}), pp. \bibinfo{pages}{251--261},
  \doi{10.1016/j.entcs.2011.01.035}.

\bibitemdeclare{unpublished}{EV}
\bibitem{EV}
\bibinfo{author}{U.~\surnamestart Dal~Lago\surnameend} \&
  \bibinfo{author}{M.~\surnamestart Zorzi\surnameend} (\bibinfo{year}{2013}):
  \emph{\bibinfo{title}{Wave-Style Token Machines and Quantum Lambda Calculi
  (Long Version).}}
\newblock \bibinfo{note}{Available at \url{http://arxiv.org/abs/1307.0550}}.

\bibitemdeclare{article}{Del11}
\bibitem{Del11}
\bibinfo{author}{Y.~\surnamestart Delbecque\surnameend} (\bibinfo{year}{2011}):
  \emph{\bibinfo{title}{Game Semantics for Quantum Data}}.
\newblock {\sl \bibinfo{journal}{Electron. Notes Theor. Comput. Sci.}}
  \bibinfo{volume}{270}(\bibinfo{number}{1}), pp. \bibinfo{pages}{41--57},
  \doi{10.1016/j.entcs.2011.01.005}.

\bibitemdeclare{article}{DP08}
\bibitem{DP08}
\bibinfo{author}{Y.~\surnamestart Delbecque\surnameend} \&
  \bibinfo{author}{P.~\surnamestart Panangaden\surnameend}
  (\bibinfo{year}{2008}): \emph{\bibinfo{title}{Game Semantics for Quantum
  Stores}}.
\newblock {\sl \bibinfo{journal}{Electron. Notes Theor. Comput. Sci.}}
  \bibinfo{volume}{218}, pp. \bibinfo{pages}{153--170},
  \doi{10.1016/j.entcs.2008.10.010}.

\bibitemdeclare{inproceedings}{JYG89}
\bibitem{JYG89}
\bibinfo{author}{J.-Y. \surnamestart Girard\surnameend} (\bibinfo{year}{1989}):
  \emph{\bibinfo{title}{Geometry of Interaction {I}: Interpretation of System
  {F}}}.
\newblock In: {\sl \bibinfo{booktitle}{Proc. of the Logic Colloquium '88}}, pp.
  \bibinfo{pages}{221--260}, \doi{10.1016/s0049-237x(08)70271-4}.

\bibitemdeclare{inproceedings}{GA92}
\bibitem{GA92}
\bibinfo{author}{G.~\surnamestart Gonthier\surnameend},
  \bibinfo{author}{M.~\surnamestart Abadi\surnameend} \& \bibinfo{author}{J.-J.
  \surnamestart L{\'e}vy\surnameend} (\bibinfo{year}{1992}):
  \emph{\bibinfo{title}{The Geometry of Optimal Lambda Reduction}}.
\newblock In: {\sl \bibinfo{booktitle}{POPL}}, pp. \bibinfo{pages}{15--26},
  \doi{10.1145/143165.143172}.

\bibitemdeclare{inproceedings}{HH11}
\bibitem{HH11}
\bibinfo{author}{I.~\surnamestart Hasuo\surnameend} \&
  \bibinfo{author}{N.~\surnamestart Hoshino\surnameend} (\bibinfo{year}{2011}):
  \emph{\bibinfo{title}{Semantics of higher-order quantum computation via
  geometry of interaction}}.
\newblock In: {\sl \bibinfo{booktitle}{LICS}}, pp. \bibinfo{pages}{237--246},
  \doi{10.1109/LICS.2011.26}.

\bibitemdeclare{inproceedings}{Mackie95}
\bibitem{Mackie95}
\bibinfo{author}{Ian \surnamestart Mackie\surnameend} (\bibinfo{year}{1995}):
  \emph{\bibinfo{title}{The Geometry of Interaction Machine}}.
\newblock In: {\sl \bibinfo{booktitle}{POPL}}, pp. \bibinfo{pages}{198--208},
  \doi{10.1145/199448.199483}.

\bibitemdeclare{book}{NieCh00}
\bibitem{NieCh00}
\bibinfo{author}{M.~\surnamestart Nielsen\surnameend} \&
  \bibinfo{author}{I.~\surnamestart Chuang\surnameend} (\bibinfo{year}{2000}):
  \emph{\bibinfo{title}{Quantum computation and quantum information}}.
\newblock \bibinfo{publisher}{Cambridge University Press}.

\bibitemdeclare{article}{SV06}
\bibitem{SV06}
\bibinfo{author}{P.~\surnamestart Selinger\surnameend} \&
  \bibinfo{author}{B.~\surnamestart Valiron\surnameend} (\bibinfo{year}{2006}):
  \emph{\bibinfo{title}{A lambda calculus for quantum computation with
  classical control}}.
\newblock {\sl \bibinfo{journal}{Math. Structures Comput. Sci.}}
  \bibinfo{volume}{16}(\bibinfo{number}{3}), pp. \bibinfo{pages}{527--552},
  \doi{10.1017/S0960129506005238}.

\bibitemdeclare{article}{SelingerV08}
\bibitem{SelingerV08}
\bibinfo{author}{Peter \surnamestart Selinger\surnameend} \&
  \bibinfo{author}{Beno\^{\i}t \surnamestart Valiron\surnameend}
  (\bibinfo{year}{2008}): \emph{\bibinfo{title}{On a Fully Abstract Model for a
  Quantum Linear Functional Language}}.
\newblock {\sl \bibinfo{journal}{Electron. Notes Theor. Comput. Sci.}}
  \bibinfo{volume}{210}, pp. \bibinfo{pages}{123--137},
  \doi{10.1016/j.entcs.2008.04.022}.

\bibitemdeclare{article}{Shor97}
\bibitem{Shor97}
\bibinfo{author}{Peter~W. \surnamestart Shor\surnameend}
  (\bibinfo{year}{1997}): \emph{\bibinfo{title}{Polynomial-time algorithms for
  prime factorization and discrete logarithms on a quantum computer}}.
\newblock {\sl \bibinfo{journal}{SIAM J. Comput.}}
  \bibinfo{volume}{26}(\bibinfo{number}{5}), pp. \bibinfo{pages}{1484--1509},
  \doi{10.1137/S0097539795293172}.

\bibitemdeclare{article}{vT04}
\bibitem{vT04}
\bibinfo{author}{A.~\surnamestart van Tonder\surnameend}
  (\bibinfo{year}{2004}): \emph{\bibinfo{title}{A lambda calculus for quantum
  computation}}.
\newblock {\sl \bibinfo{journal}{SIAM J. Comput.}}
  \bibinfo{volume}{33}(\bibinfo{number}{5}), pp. \bibinfo{pages}{1109--1135},
  \doi{10.1137/S0097539703432165}.

\bibitemdeclare{inproceedings}{VVZ14}
\bibitem{VVZ14}
\bibinfo{author}{M.~\surnamestart Volpe\surnameend},
  \bibinfo{author}{L.~\surnamestart Vigan\`o\surnameend} \&
  \bibinfo{author}{M~\surnamestart Zorzi\surnameend} (\bibinfo{year}{2014}):
  \emph{\bibinfo{title}{Quantum States Transformation and Branching Distributed
  Temporal Logic}}.
\newblock In: {\sl \bibinfo{booktitle}{WOLLIC}}, {\sl \bibinfo{series}{LNCS}}
  \bibinfo{volume}{8652}, pp. \bibinfo{pages}{1--19},
  \doi{10.1007/978-3-662-44145-9\_1}.

\bibitemdeclare{inproceedings}{Yoshi14}
\bibitem{Yoshi14}
\bibinfo{author}{Akira \surnamestart Yoshimizu\surnameend},
  \bibinfo{author}{Ichiro \surnamestart Hasuo\surnameend},
  \bibinfo{author}{Claudia \surnamestart Faggian\surnameend} \&
  \bibinfo{author}{Ugo \surnamestart Dal~Lago\surnameend}
  (\bibinfo{year}{2014}): \emph{\bibinfo{title}{Measurements in Proof Nets as
  Higher-Order Quantum Circuits}}.
\newblock In: {\sl \bibinfo{booktitle}{ESOP}}, {\sl \bibinfo{series}{LNCS}}
  \bibinfo{volume}{8410}, pp. \bibinfo{pages}{371--391},
  \doi{10.1007/978-3-642-54833-8\_20}.

\bibitemdeclare{article}{Zorzi13}
\bibitem{Zorzi13}
\bibinfo{author}{M.~\surnamestart Zorzi\surnameend} (\bibinfo{year}{2013}):
  \emph{\bibinfo{title}{On Quantum Lambda Calculi: a Foundational
  Perspective}}.
\newblock {\sl \bibinfo{journal}{Math. Structures Comput. Sci.}}, pp.
  \bibinfo{pages}{1--94}.
\newblock \bibinfo{note}{Accepted for Publication}.

\end{thebibliography}
\end{document}